\newcommand\numberthis{\addtocounter{equation}{1}\tag{\theequation}}
\theoremstyle{plain}
\newtheorem{theorem}{Theorem}[section]
\newtheorem{lemma}[theorem]{Lemma}
\newtheorem{fact}{Fact}
\newtheorem{proposition}[theorem]{Proposition}
\theoremstyle{definition}
\newtheorem{definition}[theorem]{Definition}
\newenvironment{Protocol}[1][htb]
  {
   \begin{algorithm}
  }{\end{algorithm}}
\theoremstyle{remark}
\newtheorem{remark}[theorem]{Remark}
\newcommand{\one}{\leavevmode\hbox{\small1\kern-3.8pt\normalsize1}}
\newcommand{\bU}{\mathbf{U}}
\newcommand{\Var}{\mathrm{Var}}
\newcommand{\Tr}{\mathrm{Tr}\,}
\newcommand{\polylog}{\mathrm{polylog}}
\newcommand{\poly}{\mathrm{poly}}
\newcommand{\Haar}{\mathrm{Haar}}
\newcommand{\qTPE}{\mathrm{qTPE}}
\newcommand{\C}{\mathbb{C}}
\newcommand{\E}{\mathbb{E}}
\newcommand{\I}{\mathbb{I}}
\newcommand{\R}{\mathbb{R}}
\newcommand{\U}{\mathbb{U}}
\newcommand{\CP}{\mathbb{CP}}
\newcommand{\cD}{\mathcal{D}}
\newcommand{\cF}{\mathcal{F}}
\newcommand{\cG}{\mathcal{G}}
\newcommand{\cH}{\mathcal{H}}
\newcommand{\cI}{\mathcal{I}}
\newcommand{\cM}{\mathcal{M}}
\newcommand{\cV}{\mathcal{V}}
\newcommand{\cY}{\mathcal{Y}}
\newcommand{\bcF}{\bar{\cF}}
\begin{document} 

\title{
Efficiently estimating 
average fidelity of a quantum logic gate using few classical random bits
} 

\author{Aditya Nema} 
\email{aditya.nema30@gmail.com}
\author{Pranab Sen} 
\email{pgdsen@tcs.tifr.res.in}
\affiliation{
School of Technology and System Science, Tata Institute of 
Fundamental Research, Mumbai, India.
} 

\begin{abstract} 
We give three new algorithms for
efficient in-place estimation, without using ancilla qubits, of 
average fidelity of a quantum logic gate 
\cite{EmersonAlickiZyczkowski}
acting on a $d$ dimensional system using much fewer  random bits than
what was known so far. 
Previous approaches for efficient estimation  of average gate  fidelity
\cite{DankertCleveEmersonLivine} 
replaced Haar random unitaries in the naive estimation algorithm 
by approximate unitary $2$-designs, and sampled them uniformly and
independently. 
In contrast, in our first algorithm
we sample the unitaries of the approximate unitary $2$-design uniformly 
using a limited independence pseudorandom generator, a powerful tool 
from derandomisation theory.
This algorithm uses 
$O(\epsilon^{-2} (\log d) (\log \epsilon^{-1}))$
number of basic operations in order to estimate the average gate fidelity
to within an additive error $\epsilon$, which is the same as 
\cite{DankertCleveEmersonLivine}.
However, it only uses
$O((\log d)(\log \epsilon^{-1}))$ random bits, which  is
much lesser number than the 
$\Omega(\epsilon^{-2} (\log d) (\log \epsilon^{-1}))$
random bits used in the previous works. Reducing the requirement of 
classical random
bits increases the reliability of estimation as often, high quality
random bits are an expensive computational resource.

Our second efficient algorithm, based on a $4$-quantum tensor product
expander, works if the gate dimension $d$ is large. It uses even lesser
random bits than the first algorithm, and has the added advantage that
it needs to implement only one unitary versus potentially all the unitaries
of an approximate $2$-design in the first algorithm.
 Our third efficient algorithm,
based on an $l$-quantum tensor product expander for moderately large
values of $l$, works for all values of the parameters. It uses slightly
more random bits than the other algorithms but has the advantage that
it needs to implement only a small number of unitaries versus potentially 
all the unitaries
of an approximate $2$-design in the first algorithm. This advantage is
of great importance to experimental implementations in the near future.
\end{abstract} 

\maketitle

\section{Introduction} 
Unitary quantum logic gates serve as the basic building
blocks of quantum circuits implementing quantum algorithms. 
They are nothing but unitary operators chosen from a predetermined
set. Implementation of any quantum algorithm is achieved by 
application of an appropriate sequence of these gates. However, in 
practice there is always some error between the ideal gate output given
a particular input state and the actual gate output because of 
noise. Thus if we want to apply a quantum gate, it is 
desirable that the noisy experimental version $\Lambda$ 
be close to the ideal gate $\bU$ with 
respect to some measure. Most works use the so-called {\em fidelity}
as a measure of closeness. This leads to the notion of gate 
fidelity for a particular input state. 
Notice that this characterisation depends on the quantum 
state inputted to the gate. 
However when the gate is used as part of a quantum
circuit, it may not be feasible to figure out the states that may possibly
be inputted to the gate during the course of operation of the circuit.
One would like to remove this state dependence in the definition
of gate fidelity and instead come up with a quantity that
serves as a general measure of the quality of the gate implementation. 
One way to do this is to consider the gate fidelity averaged over the
Haar probability measure on all possible pure input states. This quantity
is called the
{\em average gate fidelity} \cite{EmersonAlickiZyczkowski}. 

Emerson et al. \cite{EmersonAlickiZyczkowski} gave an algorithm 
for estimating average fidelity of a unitary
$d$-dimensional gate using several independent samples of 
$d \times d$ Haar random 
unitaries. While they did not explicitly bound the number of samples
required in order to obtain an estimate of the average fidelity to
within additive error $\epsilon$, their method can be analysed to show
that $O(\epsilon^{-2})$ independent Haar random samples of unitaries
suffice. This is prohibitively expensive, both
in terms of the computational cost required to implement the Haar random
unitaries as well as in terms of the number of
random bits required to do the sampling. Both these quantities are
at least $\Omega(\epsilon^{-2} \frac{d^2}{\log d} \log (1/\epsilon))$
\cite[Corollary~4.2.13]{Vershynin}. 
Later works \cite{DankertCleveEmersonLivine, EmersonEtAl} showed that 
the computational cost and the number of random bits 
can be drastically brought down 
by choosing independent
uniform samples from a unitary $2$-design. Again, these papers did 
not rigorously
estimate the running time and the usage of random bits of their algorithm,
but nevertheless those can be analysed to obtain a bound of
$O(\epsilon^{-2} \polylog(d/\epsilon))$ for both.

In this paper, we treat the usage of classical random bits as a resource
and try to minimise it. One of the reasons behind this is that high
quality random bits are an expensive computational resource. Thus,
reducing them while preserving the efficiency of the algorithm and
the estimation error would generally lead to more reliable estimation
in practice. This is especially important because in the near future,
we expect to have experimental realisations of quantum circuits on
about ten to fifty qubits, and they have to be benchmarked accurately
and reliably so as to further the goal for having a practical quantum
computer one day. Another reason behind reducing the number of random
bits is that it often gives us deep insights into the algorithms
involved, leading to serendipitous additional optimisations that may
have otherwise escaped our attention. In this paper, we shall actually
see an example of such a serendipitous optimisation. Our second and
third algorithms not only use far less random bits that what was known
earlier; they also require the algorithms to implement a much smaller
set of unitaries than potentially all unitaries of an approximate
$2$-design in earlier works
\cite{DankertCleveEmersonLivine, EmersonEtAl}. This is very important
for the near future because implementing a wide range of unitaries is
technologically fraught with immense challenges. An excellent introduction
to derandomisation goals and techniques in computer science can
be found in the book \cite{Goldreich}.

We give three new algorithms for
efficient in-place estimation, without using ancilla qubits, of the
average fidelity of a quantum logic gate
acting on a $d$ dimensional system using much fewer  random bits than
what was known so far. 
We consider in-place algorithms only because good quality qubits are 
likely to remain a very expensive resource in near term experimental
implementations, and so we want to avoid ancilla qubits as much as
possible.
In our first algorithm, in contrast to earlier works,
we sample the unitaries of the approximate unitary $2$-design uniformly 
using a limited independence pseudorandom generator
\cite{AlonGoldreichHastadPeralta, SchmidtSiegelSrinivasan}, a powerful 
tool from derandomisation theory.
This algorithm uses 
$O(\epsilon^{-2} (\log \delta^{-1}) (\log d) (\log \epsilon^{-1}))$
number of basic operations in order to estimate the average gate fidelity
to within an additive error $\epsilon$ with confidence $1-\delta$, 
which is the same as 
\cite{DankertCleveEmersonLivine}.
However, it only uses
$
O((\log \delta^{-1})(
(\log d)(\log \epsilon^{-1}) +
\log \log \delta^{-1}
)
)
$ 
random bits, which  is
much lesser number than the 
$\Omega(\epsilon^{-2} (\log d) (\log \epsilon^{-1}))$
random bits used in the previous works. 

Our second efficient algorithm, based on a $4$-quantum tensor product
expander \cite{HarrowHastings}, works if the gate dimension $d$ is 
large. It uses even lesser
random bits than the first algorithm, and has the added advantage that
it needs to implement only one unitary versus potentially all 
the unitaries
of an approximate $2$-design in the first algorithm.
Our third efficient algorithm,
based on an $l$-quantum tensor product expander \cite{HarrowHastings} 
for moderately large
values of $l$, works for all values of the parameters. It uses slightly
more random bits than the other algorithms but has the advantage that
it needs to implement only a small number of unitaries versus potentially 
all the unitaries
of an approximate $2$-design in the first algorithm. 

The rest of the 
paper is organised as follows: 
\begin{itemize}

\item 
In Section~\ref{sec:prelim} we build up some notations 
and preliminaries that will 
be used throughout the paper.

\item 
Section~\ref{sec:tail} describes how to bound the tail of
the gate fidelity distribution when the unitaries are chosen from
an approximate $l$-quantum tensor product expander (qTPE).

\item 
Section~\ref{sec:dankert} summarises the earlier work on estimating
average gate fidelity adding a rigorous analysis of estimation and
confidence errors missing in previous works.

\item 
In Section~\ref{sec:efficient}, we describe a randomness efficient
algorithm for estimating average gate fidelity using approximate
$2$-designs combined with a limited independence pseudorandom 
generator.

\item 
In Section~\ref{sec:4design}, we describe a randomness efficient
algorithm for estimating average gate fidelity using an approximate
$4$-qTPE, which works if the gate dimension is large and needs to
implement only one unitary. 

\item 
In Section~\ref{sec:qTPE}, we describe a randomness efficient
algorithm for estimating average gate fidelity using approximate
$l$-designs for moderate values of $l \geq 4$, which has the advantage
that the estimation procedure needs to potentially implement only a
very small number of unitaries.

\item 
We conclude in Section~\ref{sec:conclude} with a discussion
of what we have achieved, and directions for future work.

\end{itemize}
    
\section{Notation and preliminaries}
\label{sec:prelim}
Throughout the paper, $\cH$ denotes a complex Hilbert 
space of finite dimension $d$ and $\cH^{\otimes m}$ denotes the $m$ 
fold tensor product of $\cH$. We use $\one$ to denote the 
identity operator on $\cH$. $\cM_{k,d}$ denotes the 
vector space of $k \times d$ linear operators over complex field and 
$\cM_d = \cM_{d,d}$. Note that $\cM_d$ is itself 
a Hilbert dpace of dimension $d^2$ with Hilbert-Schmidt inner product, 
defined as: 
$ 
\langle A,B \rangle \triangleq Tr(A^{\dagger}B).
$ 
For $p > 0$, we let
$\lVert \cdot \rVert_p$ denote the Schatten $p$-norm
of operators in $\cM_d$, defined as: 
$
\lVert A \rVert_p \triangleq 
[\Tr (A^\dagger A)^{p/2}]^{1/p}.
$ 
This is nothing but the $\ell_p$-norm of the vector of singular values
of $A$. The case $p = 1$ is called the {\em trace norm}. The case $p = 2$
is the Frobenius norm or the Hilbert-Schmidt norm induced from the
eponymous inner product. Letting $p \rightarrow \infty$ gives us
the Schatten $\ell_\infty$-norm which is nothing but the largest singular
value of $A$, aka operator norm or spectral norm of $A$.
Often, we use $\rho$ to denote a quantum state or a density matrix which is
a Hermitian, positive semidefinite matrix with unit trace. We let
$\cD(d)$ denote the set of all $d \times d$ density 
matrices. A pure quantum quantum is a rank one density matrix. We
denote by $\CP^{d-1}$ the set of pure quantum states in $\cH$.
The notation $\ket{\cdot}$ denotes a vector of unit 
$\ell_2$-length. Thus if $\ket{\psi}$ is a unit vector, 
$\ket{\psi}\bra{\psi}$ is a pure quantum state. We will often abuse
notation and use $\ket{\psi}$ to denote a pure quantum state also.

We use $\Lambda$ 
to denote the noisy experimental realisation of an ideal unitary quantum 
gate $\bU$. For a bipartite Hilbert space
$\cH_1 \otimes \cH_2$, the partial trace $\Tr_{\cH_2}$ denotes the 
operation of 
tracing out $\cH_2$. We use $X$ to 
denote a random variable and $\bar{X}$ to denote its expected value with
respect to a probability measure $\mu$, i.e., 
$\bar{X} \triangleq \int X \, d\mu$. The notation $\Var(X)$ denotes the 
variance of random variable $X$, i.e. $Var(X) = 
\overline{(X-\bar{X})^2}$. 
The symbol $\U(d)$ stands for the 
the unitary group on $\cH$ i.e. the group of $d \times d$ complex
unitary matrices. 
We tacitly assume that the
ceiling is taken of any formula that provides dimension or value of $t$ in 
unitary $t$-design. The symbol $\Haar$ is used to denote the
unique unitarily invariant 
Haar probability measure on $\U(d)$, or $\CP^{d-1}$ as appropriate. 
Expectation with respect to a measure $\mu$ is denoted by $\E_\nu[\cdot]$.

{\em Fidelity} between two quantum states $\rho$ and $\sigma$ is 
defined as: 
$
F(\rho,\sigma) \triangleq 
\lVert \sqrt{\rho}\sqrt{\sigma} \rVert_1^2 = 
{(\Tr\sqrt{\sqrt{\rho}\sigma\sqrt{\rho}} )}^2.
$ 
Fidelity is a measure of distinguishabilty of two states. 
It is easy to see that
$F(\rho,\sigma)=1$ 
implies $\rho$ and $\sigma$ are identical, and $F(\rho,\sigma)=0$ implies 
that $\rho$ and $\sigma$ have orthogonal support and there exists a single 
measurement that distinguishes them perfectly. Fidelity is related to
trace distance via the following inequality.
\begin{equation}
\label{eq:fidelitytracedistance}
1 - \sqrt{F(\rho,\sigma)} \leq
\frac{\lVert \rho - \sigma \rVert_1}{2} \leq
\sqrt{1 - F(\rho, \sigma)}.
\end{equation}

A linear 
mapping $\Lambda: \mathcal{M}_m \to \mathcal{M}_d $ is called a super 
operator, and a super operator which is completely positive and trace 
preserving is considered as a quantum operation. The vector space of
superoperators is denoted by $L(\cM_m, \cM_d)$ or just $L(\cM_m)$ if
$m = d$.
Let $\bU \in \U(d)$. Then 
$\bU$ is also a quantum operation defined 
as $\bU(\rho)= U \rho U^{\dagger}$. 
Suppose $\Lambda$ is a `noisy'
implementation of the unitary quantum operation $\bU$. 
Then 
$\Lambda$ is a quantum operation from $\cM_d$ to $\cM_d$, and so it can 
be represented using 
{\em Kraus operators} as
$
\Lambda(\rho) = \sum_k A_k \rho A_k^{\dagger},
$
where $\{A_k\}_k$ are $d \times d$ matrices called Kraus operators 
of $\Lambda$, with the property that 
$\Sigma_k A_k^{\dagger} A_k = \one$. It turns out that $d^2$ Kraus
operators suffice to describe any quantum operation from $\cM_d$ to 
$\cM_d$. We shall measure the distance between two superoperators
via the so-called {\em diamond norm} \cite{KitaevWatrous}. The 
diamond norm of a superoperator
$\Lambda: \cM_d \rightarrow \cM_d$ is defined as follows:
\[
\lVert \Lambda \rVert_\Diamond :=
\sup_m \max_{X \in \cM_{dm}: \lVert X \rVert_1 = 1} 
\lVert (\I_m \otimes \Lambda)(X) \rVert_1 =
\max_{X \in \cM_{d^2}: \lVert X \rVert_1 = 1} 
\lVert (\I_d \otimes \Lambda)(X) \rVert_1,
\]
where $\I_m$ is the identity superoperator on $\cM_m$.
A quantum operation $\Lambda$ always has 
$\lVert \Lambda \rVert_\Diamond = 1$.

{\em Gate fidelity} between 
$\Lambda$ and $\bU$ for an input state $\rho$ is defined as: 
$$ 
\cF_{\Lambda,\bU}(\rho) \triangleq 
F(\Lambda(\rho),\bU(\rho)) =
(\Tr\sqrt{\sqrt{\Lambda(\rho)}\bU(\rho)\sqrt{\Lambda(\rho)}})^2.
$$ 
The average gate fidelity $\bcF_{\Lambda, \bU}$ is now defined to be the 
expectation of the
gate fidelity $\cF_{\Lambda, \bU}(\ket{\psi}\bra{\psi})$ for pure input 
states $\ket{\psi}\bra{\psi}$ chosen from the Haar
probability measure on $\CP^{d-1}$:
\[
\bcF_{\Lambda, \bU} :=
\int_{\CP^{d-1}} 
(\Tr\sqrt{\sqrt{\Lambda(\rho)}\bU(\rho)\sqrt{\Lambda(\rho)}})^2 \;
d\,\Haar(\psi)
\]

In practice, when one wants to benchmark the quality
of the experimental implementation of a unitary quantum logic gate $\bU$,
one runs the implementation twice, first in the forward direction followed
by the backward direction. If the implemenation were indeed perfect,
this would just do the quantum operation $\bU^{-1} \bU = \one$. 
But because the implementation is noisy what we get is a quantum
operation, which we will again denote by $\Lambda$, that is close
to the identity quantum operation. Let $\ket{\psi}$ be a unit length
vector in $\cH$. Define
\[
\cF_{\Lambda}(\ket{\psi}) \triangleq
\bra{\psi} \Lambda(\ket{\psi}\bra{\psi}) \ket{\psi}.
\]
Alternately, let $V \in \U(d)$. Define
\[
\cF_{\Lambda}(V) \triangleq
\bra{0} V^{-1} \Lambda(V \ket{0}\bra{0} V^{-1}) V \ket{0}.
\]
Then the average gate fidelity of
$\Lambda$ is given by
\[
\bcF_{\Lambda} \triangleq
\bcF_{\Lambda, \one} =
\int_{\CP^{d-1}} \bra{\psi} \Lambda(\ket{\psi}\bra{\psi}) \ket{\psi}
d\,\Haar(\psi) =
\int_{\U(d)} \cF_{\Lambda}(V) d\,\Haar(V).
\]
The average gate fidelity defined above has a nice expression 
in terms of
the Kraus operators of $\Lambda$ \cite{EmersonAlickiZyczkowski}:
\[
\bcF_{\Lambda} =
\sum_k \frac{|\Tr A_k|^2 + d}{d^2 + d}.
\]

The variance of the gate fidelity under the Haar measure on $d \times d$
unitaries $V$ happens to satisfy the following inequality
\cite[Equation~18]{MagesanBlumekohoutEmerson}.
\begin{equation}
\label{eq:variance}
\Var_V[\cF_{\Lambda}(V)] \leq \frac{26}{d}.
\end{equation}

The gate fidelity is an example of a function from the unit 
$\ell_2$-norm sphere $S^{2d-1}$ in $\C^d$ to $\R$. More generally, 
consider a function $f: (S^{2d-1})^{\times t} \rightarrow \R$ defined on a
direct product of $t$ spheres. Let $\eta > 0$. The function $f$ 
is said to be {\em $\eta$-Lipshitz} if for all unit length vectors
$v_1, \ldots v_t, w_1, \ldots, w_t \in S^{2d-1}$, we have
\[
|f(v_1, \ldots v_t) - f(w_1, \ldots, w_t)| \leq
\eta \sqrt{\sum_{i=1}^t \lVert v_i - w_i \rVert_2^2}.
\]
The following fact was proved in \cite[Lemma~2.7]{FawziHaydenSen}.
\begin{fact}
\label{fact:conc}
Let $\eta > 0$.
Let $f: (S^{2d-1})^{\times t} \rightarrow \R$ be $\eta$-Lipschitz.
Consider the probability distribution on points 
$(v_1, \ldots, v_t) \in (S^{2d-1})^{\times t}$ where 
the $v_i$s are chosen independently from the Haar measure on
$S^{2d-1}$. Define 
\[
\mu \triangleq 
\E_{(v_1, \ldots, v_t)}[f(v_1, \ldots, v_t)].
\]
Let $\delta > 0$. Then,
\[
\Pr_{(v_1, \ldots, v_t)}[
|f(v_1, \ldots, v_t) - \mu| > \delta
] \leq
4 \exp(-\frac{\delta^2 d}{16 \eta^2}).
\]
\end{fact}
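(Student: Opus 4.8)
The plan is to recognise the asserted concentration as an instance of the Lévy--Gromov phenomenon on manifolds of positive Ricci curvature, applied not to a single sphere but to the product manifold $(S^{2d-1})^{\times t}$ equipped with its product Riemannian structure (whose normalised volume measure is exactly the product Haar measure, and whose product structure encodes the independence of the $v_i$). The essential point, which is what makes the bound uniform in the number of copies $t$, is that forming a Riemannian product does \emph{not} degrade the lower bound on the Ricci curvature: each factor $S^{2d-1}$, regarded as the round unit sphere in $\R^{2d}$, has constant Ricci curvature equal to $2d-2$, and the Ricci tensor of a Riemannian product is the orthogonal direct sum of the Ricci tensors of the factors, so the product satisfies $\mathrm{Ric} \ge 2d-2$ regardless of $t$. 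This is precisely why the final exponent scales like $d$ and carries no dependence on $t$.

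Before invoking curvature I would first reconcile the two metrics in play. The hypothesis states that $f$ is $\eta$-Lipschitz with respect to the Euclidean product metric $\sqrt{\sum_i \lVert v_i - w_i \rVert_2^2}$, whereas a curvature-based inequality is naturally phrased in the geodesic distance on the product manifold, namely $\sqrt{\sum_i \mathrm{dist}_{\mathrm{geo}}(v_i,w_i)^2}$. Since on each sphere the chordal distance never exceeds the geodesic distance, the Euclidean product metric is dominated pointwise by the geodesic product metric. Consequently any function that is $\eta$-Lipschitz in the Euclidean product metric is automatically $\eta$-Lipschitz in the geodesic product metric, so the hypothesis transfers at no cost.

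With these two observations in hand, the core step is to apply a concentration-of-measure inequality for the normalised Riemannian volume. I would use the Bakry--Émery route: a lower bound $\mathrm{Ric} \ge \kappa > 0$ yields a logarithmic Sobolev inequality with constant $\kappa$, and the Herbst argument converts this into the sub-Gaussian tail estimate $\Pr[\,|g - \E g| > s\,] \le 2\exp(-\kappa s^2/2)$ for every $1$-Lipschitz $g$, centred directly at the mean. Setting $g = f/\eta$, $\kappa = 2d-2$, and $s = \delta/\eta$ produces a bound of the form $2\exp(-(d-1)\delta^2/\eta^2)$, which for all $d \ge 2$ is comfortably stronger than the claimed $4\exp(-\delta^2 d/(16\eta^2))$.

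I expect the only genuinely delicate part to be the bookkeeping of constants together with the choice of centring. The sharp curvature route already delivers a two-sided bound of the form $2\exp(-\kappa\delta^2/(2\eta^2))$ about the mean, so the prefactor $4$ and the deliberately large constant $16$ in the claimed exponent are generous: they leave ample room to absorb any loss incurred, for instance by instead quoting the purely geometric Lévy--Gromov inequality (which concentrates about the \emph{median} and so needs a median-to-mean adjustment) or by using a non-optimal value of the Ricci lower bound. An entirely equivalent alternative, should one wish to avoid quoting product-manifold curvature, is to establish the sub-Gaussian estimate on a single sphere and then tensorise over the $t$ independent coordinates via the subadditivity of entropy, i.e.\ tensorisation of the log-Sobolev inequality under $\ell_2$ products; this again preserves the sub-Gaussian constant and hence the $t$-independence of the rate.
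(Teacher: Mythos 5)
The paper itself offers no proof of this fact---it is imported verbatim from Lemma~2.7 of the cited work of Fawzi, Hayden and Sen---and your argument reconstructs essentially the proof behind that source: concentration of measure on the Riemannian product $(S^{2d-1})^{\times t}$ via the $t$-independent Ricci lower bound $2d-2$, together with the chordal-versus-geodesic comparison that transfers the Euclidean Lipschitz hypothesis, with the generous constants $4$ and $16$ absorbing the median-to-mean adjustment exactly as you anticipate. Your Bakry--\'Emery/Herbst route is sound and in fact gives the stronger bound $2\exp(-(d-1)\delta^2/\eta^2)$; the only caveat is the degenerate case $d=1$, where $S^1$ has zero Ricci curvature so the curvature step yields nothing (one would instead invoke the circle's log-Sobolev constant, which your tensorisation remark accommodates), but this case is irrelevant to the paper's applications, where $d \geq 2$ throughout.
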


Now let $f$ be the following function
\[
f(\ket{\psi_1}, \ldots, \ket{\psi_t}) \triangleq
t^{-1} \sum_{i=1}^t F_{\Lambda}(\ket{\psi_i}).
\]
Trivially, 
\[
\mu \triangleq 
\E_{\ket{\psi_1}, \ldots, \ket{\psi_t}}[
f(\ket{\psi_1}, \ldots, \ket{\psi_t})
] =
\bcF_{\Lambda}.
\]
It is easy to see that 
\begin{eqnarray*}
\lefteqn{
|
f(\ket{\psi_1}, \ldots, \ket{\psi_t}) -
f(\ket{\phi_1}, \ldots, \ket{\phi_t})
|
} \\
& = &
t^{-1} |
\sum_{i=1}^t (\cF_{\Lambda}(\ket{\psi_i}) - \cF_{\Lambda}(\ket{\phi_i}))
| 
\;\leq\;
t^{-1} \sum_{i=1}^t 
|\cF_{\Lambda}(\ket{\psi_i}) - \cF_{\Lambda}(\ket{\phi_i})| \\
& \leq &
t^{-1} 
\sum_{i=1}^t (
|
\Tr[\Lambda(\ket{\psi_i}\bra{\psi_i}) \ket{\psi_i}\bra{\psi_i}] -
\Tr[\Lambda(\ket{\psi_i}\bra{\psi_i}) \ket{\phi_i}\bra{\phi_i}]
| +
|
\Tr[\Lambda(\ket{\psi_i}\bra{\psi_i}) \ket{\phi_i}\bra{\phi_i}] -
\Tr[\Lambda(\ket{\phi_i}\bra{\phi_i}) \ket{\phi_i}\bra{\phi_i}]
|
) \\
& \leq &
t^{-1} 
\sum_{i=1}^t (
\lVert \ket{\psi_i}\bra{\psi_i}] - \ket{\phi_i}\bra{\phi_i} \rVert_1
+
\lVert 
\Lambda(\ket{\psi_i}\bra{\psi_i}) -
\Lambda(\ket{\phi_i}\bra{\phi_i}) 
\rVert_1
) \\
& \leq &
2 t^{-1} \sum_{i=1}^t 
\lVert \ket{\psi_i}\bra{\psi_i} - \ket{\phi_i}\bra{\phi_i} \rVert_1 
\;\leq\;
4 t^{-1} \sum_{i=1}^t 
\lVert \ket{\psi_i} - \ket{\phi_i} \rVert_2 
\;\leq\;
4 t^{-1/2} 
\sqrt{
\sum_{i=1}^t \lVert \ket{\psi_i} - \ket{\phi_i} \rVert_2^2 
},
\end{eqnarray*}
which shows that the Lipschitz constant $\eta \leq 4 t^{-1/2}$.
Let $\delta > 0$. Consider the probability distribution on
points $(V_1, \ldots, V_t) \in \U(d)^{\times t}$ obtained by choosing
each $V_i$ independently from the Haar measure. Define 
$f(V_1, \ldots, V_t)$ in the natural fashion.
By Fact~\ref{fact:conc}, 
\begin{equation}
\label{eq:conc}
\Pr_{(V_1, \ldots, V_t)}[
|f(V_1, \ldots, V_t) - \mu| > \delta
] \leq
4 \exp(-\frac{\delta^2 d t}{256}).
\end{equation}

Now, we define an approximate 
unitary $2$-design via the so-called {\em twirling operation} as 
in \cite{DankertCleveEmersonLivine}. 
Let $\nu$ be a 
probability measure on $\U(d)$. Let $\Lambda$ be a superoperator
on $\cM_d$.
Define the $\nu$-twirling 
operation $E_{\nu}: L(\cM_d) \rightarrow L(\cM_d)$ as follows:  
\begin{equation} 
\label{eq:gnu} 
E_\nu(\Lambda) \triangleq
\left(
X \rightarrow 
\int_{\U(d)} V^\dagger \Lambda(V X V^\dagger) V \; d\nu(V)
\right),
\end{equation} 
where $\Lambda \in L(\cM_d)$ and $X \in \cM_d$.
When $\nu$ is the Haar probability measure on $\U(d)$, we shall write
the above superoperator as $E_{\Haar}(\Lambda)$.
\begin{definition}
The probablity measure  $\nu$ is an $\epsilon$-approximate unitary 
$2$-design if:
\begin{equation} 
\label{eq:approxdesign}
\lVert E_\nu(\Lambda) - E_{\Haar}(\Lambda) \rVert_\Diamond 
\leq 
\epsilon \lVert \Lambda \rVert_{\Diamond}
\end{equation}
for all superoperators $\Lambda \in L(\cM_d)$.
If $\epsilon = 0$, then $\nu$ is said to be an exact unitary $2$-design.
\end{definition}

We now recall the definition of a quantum tensor power expander
\cite{HarrowHastings}.
\begin{definition}
A quantum $t$-tensor product expander ($t$-qTPE) in $\cH$, $|\cH| = d$, of
degree $s$ can be defined as a quantum operation
$
\cG: L(\cH^{\otimes t}) \rightarrow L(\cH^{\otimes t})
$
that can be expressed as
$
\cG(M) = 
\frac{1}{s} \sum_{i=1}^s (U_i)^{\otimes t} M (U_i^{-1})^{\otimes t},
$
for any matrix $M \in L(\cH^{\otimes t})$, where
$\{U_i\}_{i=1}^s$ are $d \times d$ unitary matrices. The 
qTPE is said to have second singular value $\lambda$ if
$
\lVert \cG - \cI \rVert_\infty \leq \lambda,
$
where $\cI$ is the `ideal' quantum operation defined by its action on
a matrix $M$ by
$
\cI(M) := 
\int_{U \in \U(D)} U^{\otimes t} M (U^\dag)^{\otimes t} \,
d\,\Haar(U).
$
In other words, if $M \in L(\cH^{\otimes t})$, then
$
\lVert \cG(M) - \cI(M) \rVert_2 \leq \lambda \lVert M \rVert_2.
$
We use  the notation $(d, s, \lambda, t)$-qTPE to denote such a
quantum tensor product expander.
\end{definition}
From the above definition, it is easy to see that a
$(d, s, \lambda, t)$-qTPE is also simultaneously a
$(d, s, \lambda, t')$-qTPE for any $t' < t$.

Let $X_1, X_2, \ldots, X_n$ be a sequence of $\{0,1\}$-valued random 
variables. 
Let $2 \leq k \leq n$. 
Let $S \subseteq \{1, 2, \ldots, n\}$, $S = \{s_1, s_2, \ldots, s_k\}$.
Let $X_{s_1} X_{s_2} \cdots X_{s_k}$ denote the actual joint distribution
of the corresponding random variables. Let $0 < p < 1$. Let
$B(k, p)$ denote the Bernoulli distribution i.e. the distribution
of $k$ fully independent identical coin tosses with probability of a
coin turning up HEAD equal to $p$.
The sequence $X_1, X_2, \ldots, X_n$  is said to be 
{\em $\theta$-approximate $p$-biased $k$-wise independent} if for any
subset $S = \{s_1, \ldots, s_k\}$, 
\[
\lVert 
X_{s_1} X_{s_2} \cdots X_{s_k} -
B(k, p)
\rVert_1 \leq \theta,
\]
i.e. the joint probability distribution of any subset of the random
variables of size $k$ is $\theta$-close to the Bernoulli distribution
in $\ell_1$-distance.

We first state a Chernoff bound for a fully independent 
(i.e. $0$-approximate $n$-wise independent) sequence of random
variables. This bound can be derived from
\cite[Corollary~A.1.14]{AlonSpencer}.
\begin{fact}
\label{fact:StandardChernoff}
Let $X_1, X_2, \ldots, X_n$ be a fully independent sequence of 
identially distributed $\{0,1\}$-valued random variables. Let
$p \triangleq \E[X]$. Let $0 < \epsilon < p$. 
Let $X \triangleq n^{-1} \sum_{i=1}^n X_i$. Then
\[
\Pr [|X - p| > \epsilon] \leq 2 \exp(-\frac{\epsilon^2 n}{3}).
\]
\end{fact}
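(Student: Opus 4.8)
The plan is to obtain the two-sided deviation bound by applying the standard multiplicative Chernoff inequality separately to the upper and lower tails, and then simplifying the resulting exponents using the hypothesis $\epsilon < p$.

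First I would pass from the average $X$ to the sum $S \triangleq \sum_{i=1}^n X_i = nX$, which has mean $\mu \triangleq \E[S] = np$. The deviation event $|X - p| > \epsilon$ is exactly the union of the upper-tail event $S > (1+\delta)\mu$ and the lower-tail event $S < (1-\delta)\mu$, where I set $\delta \triangleq \epsilon/p$. The hypothesis $0 < \epsilon < p$ guarantees $0 < \delta < 1$, which is precisely the regime in which the multiplicative Chernoff bounds of \cite[Corollary~A.1.14]{AlonSpencer} apply.

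The key step is to invoke those bounds. For a sum of independent $\{0,1\}$-valued random variables with mean $\mu$ and for $0 < \delta < 1$, they assert
\[
\Pr[S > (1+\delta)\mu] \le \exp\!\Big(-\frac{\mu\delta^2}{3}\Big),
\qquad
\Pr[S < (1-\delta)\mu] \le \exp\!\Big(-\frac{\mu\delta^2}{2}\Big).
\]
These in turn follow from the exponential-moment (Markov) method: for $\lambda>0$ one has $\Pr[S \ge a] \le e^{-\lambda a}\prod_i \E[e^{\lambda X_i}] \le e^{-\lambda a}\exp(\mu(e^\lambda-1))$, using $\E[e^{\lambda X_i}] = 1 + p(e^\lambda-1) \le \exp(p(e^\lambda-1))$, and then optimising over $\lambda$ and bounding the resulting relative-entropy expression; the lower tail is handled symmetrically with $\lambda < 0$.

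Finally I would substitute $\mu\delta^2 = np\,(\epsilon/p)^2 = n\epsilon^2/p$. Since $0 < p < 1$ we have $1/p > 1$, so $\mu\delta^2/3 = n\epsilon^2/(3p) \ge n\epsilon^2/3$ and likewise $\mu\delta^2/2 = n\epsilon^2/(2p) \ge n\epsilon^2/2 \ge n\epsilon^2/3$. Hence each tail is bounded by $\exp(-\epsilon^2 n/3)$, and a union bound over the two tails yields $\Pr[|X-p|>\epsilon] \le 2\exp(-\epsilon^2 n/3)$, as claimed. I expect the only delicate point to be verifying that $\delta$ lies in the admissible range $(0,1)$ for the multiplicative bound, which is exactly where the hypothesis $\epsilon < p$ is used; the remaining manipulations of the exponential-moment estimate are routine.
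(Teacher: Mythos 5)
Your proposal is correct and follows essentially the same route as the paper: the paper states this fact without proof, noting only that it "can be derived from \cite[Corollary~A.1.14]{AlonSpencer}," which is precisely the multiplicative Chernoff bound you invoke, and your substitution $\delta = \epsilon/p$ together with $\mu\delta^2 = n\epsilon^2/p \geq n\epsilon^2$ (using $p \leq 1$) is exactly the derivation the paper leaves implicit. The only cosmetic remark is that you write $0 < p < 1$ where $p = 1$ is also allowed, but the inequality $1/p \geq 1$ and hence the whole argument still goes through in that case.
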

We now state a Chernoff-like bound for $\theta$-approximte $k$-wise
independent random variables.  This bound can be derived from
\cite[Theorem~5]{SchmidtSiegelSrinivasan}.
\begin{fact}
\label{fact:LimitedChernoff}
Let $X_1, X_2, \ldots, X_n$ be a $\theta$-approximate $p$-biased $k$-wise
independent sequence of $\{0,1\}$-valued random variables.
Let $0 < \epsilon < p$. 
Let $X \triangleq n^{-1} \sum_{i=1}^n X_i$. Suppose 
$k = e^{-1/3} \epsilon^2 n$. Then,
\[
\Pr [|X - p| > \epsilon] \leq   
\exp(-k/2) + \theta (\frac{n}{\epsilon})^k.
\]
\end{fact}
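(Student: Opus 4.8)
The plan is to use the $k$-th moment method. Write $S \triangleq \sum_{i=1}^n (X_i - p)$, so that $X - p = n^{-1} S$ and the event $|X - p| > \epsilon$ is exactly the event $|S| > \epsilon n$. Taking $k$ to be an even integer (we may round the prescribed value $e^{-1/3}\epsilon^2 n$ to the nearest even integer, in keeping with the paper's tacit ceiling convention, at the cost of only constant factors), the quantity $S^k$ is nonnegative and equals $|S|^k$, so Markov's inequality gives
\[
\Pr[|X - p| > \epsilon] = \Pr[S^k > (\epsilon n)^k] \leq \frac{\E_{\mathrm{true}}[S^k]}{(\epsilon n)^k}.
\]
The entire argument then reduces to controlling the $k$-th moment $\E_{\mathrm{true}}[S^k]$ under the actual, only approximately independent, joint distribution.

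First I would split this moment into an \emph{ideal} part, computed as if the $X_i$ were fully independent and distributed as $B(k,p)$, and a \emph{perturbation}:
\[
\E_{\mathrm{true}}[S^k] = \E_{B}[S^k] + \bigl(\E_{\mathrm{true}}[S^k] - \E_{B}[S^k]\bigr).
\]
For the ideal term I would invoke the standard moment inequality for a sum of fully independent $\{0,1\}$ variables (the moment bound underlying Fact~\ref{fact:StandardChernoff}, equivalently the fully independent case of \cite[Theorem~5]{SchmidtSiegelSrinivasan}). Since the centred summands have variance $p(1-p) \leq 1/4$, this bound has the shape $\E_{B}[S^k] \leq (c\,k\,n)^{k/2}$ for an absolute constant $c$; dividing by $(\epsilon n)^k$ and substituting $k = e^{-1/3}\epsilon^2 n$ makes every factor of $n$ and $\epsilon$ cancel, collapsing the ratio to $(c\,e^{-1/3})^{k/2}$. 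The constant $e^{-1/3}$ in the prescribed value of $k$ is tuned precisely so that this is at most $(e^{-1})^{k/2}$, producing the main term $\exp(-k/2)$.

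For the perturbation I would expand $S^k = \sum_{(i_1,\dots,i_k)} \prod_{j=1}^k (X_{i_j} - p)$ as a sum of at most $n^k$ monomials. Each monomial depends on at most $k$ of the variables, and since $|X_i - p| \leq \max(p,1-p) \leq 1$ it has sup-norm at most $1$. Hence, using that the $\theta$-approximate $p$-biased $k$-wise independence hypothesis controls every $k$-subset (and that marginals of $\ell_1$-close distributions are $\ell_1$-close, so smaller subsets are handled too), the difference between the true and Bernoulli expectations of each monomial is at most $\theta$. Summing over the at most $n^k$ monomials bounds the perturbation by $\theta n^k$, contributing $\theta n^k/(\epsilon n)^k = \theta\epsilon^{-k} \leq \theta (n/\epsilon)^k$ to the tail. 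Combining the two contributions yields the claimed bound.

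The main obstacle is the ideal term: one needs the moment inequality for independent bounded variables with constants sharp enough that, at $k = e^{-1/3}\epsilon^2 n$, the normalised moment is genuinely at most $\exp(-k/2)$ rather than merely $\exp(-\Omega(k))$. This is exactly the technical content imported from \cite[Theorem~5]{SchmidtSiegelSrinivasan}, and the hypothesis $\epsilon < p$ keeps the deviation within the regime where that moment estimate and the lower tail are meaningful. The perturbation step, by contrast, is routine monomial counting once the per-monomial sup-norm bound is in place.
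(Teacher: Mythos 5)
Your proposal is correct, and it takes essentially the same route as the paper: the paper offers no proof of this fact beyond citing \cite[Theorem~5]{SchmidtSiegelSrinivasan}, and your argument --- Markov's inequality applied to $S^k$ for even $k$, the fully independent moment bound collapsing to $\exp(-k/2)$ at $k = e^{-1/3}\epsilon^2 n$, and the monomial-counting perturbation bound $\theta n^k$ giving $\theta\epsilon^{-k} \leq \theta(n/\epsilon)^k$ --- is exactly the moment-method derivation underlying that cited theorem. Since you also defer the one genuinely delicate ingredient (the sharp constant in the moment inequality for fully independent bounded summands) to the same reference, your proof and the paper's rest on the same foundation.
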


We now recall that a $\theta$-approximate $1/2$-biased $k$-wise 
independent sequence of random bits can be efficiently constructed by 
spending only a small amount of truly random bits 
\cite[Theorem~3]{AlonGoldreichHastadPeralta}.
\begin{fact}
\label{fact:smallbias}
Let $k$, $n$ be positive integers, $\theta > 0$ and
$r \triangleq k + 2 \log \log k + 2 \log \log n + 2 \log \theta^{-1}$.
Then there is a function $f: \{0,1\}^r \rightarrow \{0,1\}^n$ such
that, if the uniform distribution on $\{0,1\}^r$ is provided at the input,
the resulting sequence $X_1, X_2, \ldots, X_n$ at 
the output is a $\theta$-approximate $1/2$-biased $k$-wise 
independent sequence of random bits. 
Moreover there is a deterministic
algorithm that, given an input string $z \in \{0,1\}^r$ and a bit 
position $i \in \{1, 2, \ldots, n\}$, 
computes the output bit $f(z)_i$ in time $\poly(r)$.
\end{fact}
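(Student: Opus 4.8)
The plan is to derive the statement by combining two ingredients: a Fourier-analytic reduction showing that a \emph{small-bias} generator already produces approximately $1/2$-biased $k$-wise independent bits, and the explicit small-bias construction of \cite[Theorem~3]{AlonGoldreichHastadPeralta}, applied not to the $n$ output positions directly but to a compressed description of the at most $k$ positions that any test can inspect. Recall that a distribution $D$ on $\{0,1\}^n$ is $\delta$-biased if $|\E_D[(-1)^{\sum_{i \in T} X_i}]| \leq \delta$ for every nonempty $T$. The first and only substantive step is to bound, for a fixed index set $S$ of size $k$, the $\ell_1$-distance of the marginal $D_S$ from $B(k,1/2)$. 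Expanding the marginal in the Walsh--Fourier basis and writing $\hat{D}(T) \triangleq \E_D[(-1)^{\sum_{i\in T} X_i}]$, one gets for every $a \in \{0,1\}^k$
\[
\Pr[D_S = a] - 2^{-k} =
2^{-k} \sum_{\emptyset \neq T \subseteq S} (-1)^{\sum_{i\in T} a_i}\, \hat{D}(T).
\]
Using $|\hat{D}(T)| \leq \delta$ on the $2^k - 1$ nonempty characters, Parseval's identity gives $\sum_{a} (\Pr[D_S = a] - 2^{-k})^2 = 2^{-k}\sum_{\emptyset \neq T \subseteq S}\hat{D}(T)^2 \leq \delta^2$, and Cauchy--Schwarz over the $2^k$ points $a$ upgrades this to $\lVert D_S - B(k,1/2)\rVert_1 \leq 2^{k/2}\,\delta$. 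Thus choosing the bias parameter $\delta \triangleq \theta\, 2^{-k/2}$ makes the output $\theta$-approximate $1/2$-biased $k$-wise independent.

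With the reduction fixed, the remaining task is to account for the seed. Applying the generator to all $n$ positions would cost $\Theta(\log n)$ random bits, so instead I would invoke the AGHP generator on the $O(\log(kn))$-bit encoding of the $\le k$ queried positions, which is precisely the form in which \cite[Theorem~3]{AlonGoldreichHastadPeralta} is efficient. Substituting $\log \delta^{-1} = (k/2) + \log\theta^{-1}$ into their seed length, whose leading behaviour is $2\log\delta^{-1}$ together with doubly-logarithmic corrections in $n$ and in $k$, reproduces $r = k + 2\log\theta^{-1} + 2\log\log n + 2\log\log k$, matching the claim; the factor $2$ on $\log\delta^{-1}$ is exactly what converts the $k/2$ into the leading $k$. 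The deterministic $\poly(r)$-time evaluation of a single output bit $f(z)_i$ is inherited directly from the algebraic (finite-field trace / linear-feedback) description of the generator: the bit $f(z)_i$ is a fixed field-arithmetic function of the seed $z$ and the index $i$, computable without materialising the entire length-$n$ string.

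The main obstacle is not conceptual but a matter of careful bookkeeping of the low-order terms. The Fourier reduction is the standard Vazirani XOR argument and is routine; the delicate point is reproducing the \emph{doubly}-logarithmic dependence $2\log\log n$ on the block length (rather than the $\Theta(\log n)$ one would get from a naive small-bias space over all $n$ bits), which forces one to use the specific AGHP construction over the compressed position-description, and then to verify that the substitution $\delta = \theta\, 2^{-k/2}$ combines with the factor-$2$ in their seed length to give exactly the leading $k$ together with the stated $2\log\log n + 2\log\log k + 2\log\theta^{-1}$ corrections.
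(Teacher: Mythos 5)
You should first know that the paper offers no proof of this statement: Fact~\ref{fact:smallbias} is quoted directly from \cite[Theorem~3]{AlonGoldreichHastadPeralta}, so the relevant comparison is with the standard proof of that cited theorem. Half of your proposal matches that proof exactly and is carried out correctly: the Vazirani XOR-lemma reduction (Fourier inversion, $|\hat{D}(T)|\leq\delta$ on all nonempty characters, Parseval giving $\sum_a(\Pr[D_S=a]-2^{-k})^2\leq\delta^2$, Cauchy--Schwarz giving $\lVert D_S-B(k,1/2)\rVert_1\leq 2^{k/2}\delta$) is right, and the choice $\delta=\theta 2^{-k/2}$, combined with the factor $2$ multiplying $\log\delta^{-1}$ in the small-bias seed length, is indeed what produces the leading term $k+2\log\theta^{-1}$.

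The gap is in the construction half. ``Invoking the AGHP generator on the $O(\log(kn))$-bit encoding of the $\leq k$ queried positions'' is not a valid construction: the generator must be fixed before anyone decides which positions to inspect, and a single sample point has to fool all $\binom{n}{\leq k}$ parity tests simultaneously, so the queried positions cannot be an input to the generator. The mechanism that actually yields the doubly-logarithmic dependence on $n$ (in AGHP, following Naor--Naor) is a composition: take an $m\times n$ parity-check matrix $H$ of a binary BCH code of designed distance $>k$, so that $m\approx(k/2)\log n$ and every $\leq k$ columns of $H$ are linearly independent; draw $z\in\{0,1\}^m$ from an AGHP $\delta$-biased space over $m$ (not $n$) bits, costing $2\log(m/\delta)$ seed bits; and output $X_i=\langle z,H_i\rangle$. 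Every XOR of at most $k$ output bits is then a \emph{nonzero} parity of $z$ and hence has bias at most $\delta$; the $\log\log n$ enters only because the small-bias generator lives on $m=O(k\log n)$ bits. Note also that this accounting gives a correction $2\log m=2\log k+2\log\log n+O(1)$, i.e.\ $2\log k$ rather than the $2\log\log k$ you set out to ``reproduce'' (the same discrepancy appears in the paper's transcription of AGHP); since $\log k=o(k)$ this is harmless for every application in the paper, but your bookkeeping step, as written, chases a term that the construction does not actually produce. Finally, the evaluation time is honestly $\poly(r,\log n)$ rather than $\poly(r)$ --- computing $X_i$ requires the $m=\Theta(k\log n)$ intermediate bits of $z$ and the column $H_i$, and $\log n$ need not be polynomial in $r$ --- though this imprecision is already present in the statement being proved.
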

Facts~\ref{fact:LimitedChernoff} and \ref{fact:smallbias} allow us 
to prove the following easily.
\begin{fact}
\label{fact:sampling}
Let $\cY$ be a set and $g$ a function 
$g: \cY \rightarrow [0,1]$. Let $p \triangleq \E_{Y}[g(Y)]$, where the
expectation is taken over the random variable $Y$ uniformly distributed 
on $\cY$. 
Let $n$ be a positive integer, $\epsilon, \theta > 0$ and
$
r \triangleq 
4 \epsilon^2 n \log |\cY|) + 2 \log \theta^{-1}.
$
Then there is a function $f: \{0,1\}^r \rightarrow \cY^n$ such
that, if the uniform distribution on $\{0,1\}^r$ is provided at the input,
the resulting sequence $X_1, X_2, \ldots, X_n$ of random bits at 
the output, where $X_i = 1$ with probability $g(Y_i)$ and $0$ otherwise,
satisfies
\[
\Pr [|X - p| > \epsilon] \leq   
\exp(-\epsilon^2 n / 4) + 
\theta (\frac{n}{\epsilon})^{\frac{\epsilon^2 n}{4}},
\]
$X$ being defined as $n^{-1} \sum_{i=1}^n X_i$. 
Moreover there is a deterministic
algorithm that, given an input string $z \in \{0,1\}^r$ and a 
position $i \in \{1, 2, \ldots, n\}$, 
computes the output element $f(z)_i \in \cY$ in time 
$\poly(r) \log |\cY|$.
\end{fact}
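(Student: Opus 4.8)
The plan is to reduce the claim to Fact~\ref{fact:LimitedChernoff} by manufacturing, out of the small-bias generator of Fact~\ref{fact:smallbias}, a sequence $X_1,\ldots,X_n$ of $\{0,1\}$-valued random variables that is $\theta$-approximate $p$-biased $k$-wise independent with $k \triangleq e^{-1/3}\epsilon^2 n$. Once such a sequence is in hand, the stated tail is exactly the conclusion of Fact~\ref{fact:LimitedChernoff}; it then only remains to substitute this value of $k$ and absorb the constant in the exponent to put the bound into the displayed form (in particular $e^{-1/3}/2 > 1/4$, so $\exp(-k/2)\le\exp(-\epsilon^2 n/4)$ for the leading term).

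\emph{Construction.} I may assume without loss of generality that $|\cY|$ is a power of two and fix a bijection between $\cY$ and $\{0,1\}^{\log|\cY|}$ (otherwise I would pad $\cY$ and spend a few extra bits, which only inflates lower-order terms). I would invoke Fact~\ref{fact:smallbias} to produce $N \triangleq n\log|\cY|$ output bits that are $\theta$-approximate $1/2$-biased $K$-wise independent, where $K \triangleq k\log|\cY|$. Cutting these $N$ bits into $n$ consecutive blocks of $\log|\cY|$ bits, the $i$-th block, read through the bijection, is the sample $Y_i\in\cY$, and I set $f(z)\triangleq(Y_1,\ldots,Y_n)$. Finally I let $X_i$ be the outcome of an independent coin of bias $g(Y_i)$, so that $X_i=1$ with probability $g(Y_i)$ as required; note that these coins sit \emph{outside} the seed and so do not enter $r$.

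\emph{Transfer of independence to the $X_i$.} Fix any index set $S=\{s_1,\ldots,s_k\}$. The $k\log|\cY|=K$ bits defining $\{Y_i\}_{i\in S}$ are, by construction, $\theta$-close in $\ell_1$ to the uniform distribution on $\{0,1\}^K$; through the bijection this means $(Y_i)_{i\in S}$ is $\theta$-close in $\ell_1$ to the uniform distribution on $\cY^k$. The key observation is that producing $(X_i)_{i\in S}$ from $(Y_i)_{i\in S}$ is a stochastic channel, namely independently replacing each coordinate $Y_i$ by a $g(Y_i)$-biased bit. Since $\ell_1$ distance between distributions can only shrink under a stochastic map, the law of $(X_i)_{i\in S}$ is within $\theta$ of the law it would have were $(Y_i)_{i\in S}$ exactly uniform; and under an exactly uniform, independent $(Y_i)_{i\in S}$ the bits $X_i$ are independent with common mean $\E_Y[g(Y)]=p$, i.e. distributed as $B(k,p)$. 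Hence $\lVert X_{s_1}\cdots X_{s_k} - B(k,p)\rVert_1 \le \theta$, which is exactly $\theta$-approximate $p$-biased $k$-wise independence.

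\emph{Seed length, efficiency, and the main obstacle.} Feeding $(K,N,\theta)$ into Fact~\ref{fact:smallbias} costs a seed of length $K + 2\log\log K + 2\log\log N + 2\log\theta^{-1} = e^{-1/3}\epsilon^2 n\log|\cY| + 2\log\theta^{-1} + (\text{lower order})$, and the generous factor $4$ in the stated $r = 4\epsilon^2 n\log|\cY| + 2\log\theta^{-1}$ comfortably absorbs both the constant $e^{-1/3}<1$ and the two $\log\log$ terms. To evaluate $f(z)_i$ I would compute the $\log|\cY|$ bits of block $i$ using the $\poly(r)$-time bit-extraction routine of Fact~\ref{fact:smallbias} and decode them through the bijection, for a total of $\poly(r)\log|\cY|$ time. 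I expect the middle step to be the main obstacle: one must choose the block size $\log|\cY|$ and the independence parameter $K=k\log|\cY|$ so that bit-level small bias and $K$-wise independence transfer cleanly to the element level, and—crucially—recognise that the $g(Y_i)$-biased coins act \emph{after} the seed as a channel, so that it is contractivity of $\ell_1$ distance under stochastic maps, rather than any exact realisation of the biases in bits, that yields closeness to $B(k,p)$.
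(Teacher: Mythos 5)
Your construction follows the same route the paper intends (the paper gives no actual proof, merely asserting that Facts~\ref{fact:LimitedChernoff} and \ref{fact:smallbias} yield the claim "easily"): generate $n\log|\cY|$ small-bias bits via Fact~\ref{fact:smallbias}, cut them into blocks to obtain the $Y_i$, and transfer the approximate $K$-wise independence of the bits to $k$-wise independence of the measurement outcomes $X_i$. Your stochastic-channel observation — that each $X_i$ is produced from $Y_i$ by a channel, so contractivity of $\ell_1$ distance under stochastic maps gives $\lVert X_{s_1}\cdots X_{s_k} - B(k,p)\rVert_1 \le \theta$ — is correct and is precisely the missing glue needed to invoke Fact~\ref{fact:LimitedChernoff}; the seed-length accounting is also fine up to lower-order terms.

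However, your final step — ``substitute this value of $k$ and absorb the constant in the exponent'' — fails for the \emph{second} term of the bound. Fact~\ref{fact:LimitedChernoff} forces $k = e^{-1/3}\epsilon^2 n \approx 0.72\,\epsilon^2 n$, and since $n/\epsilon > 1$, a larger exponent makes the term $\theta(n/\epsilon)^k$ \emph{larger}, not smaller: you obtain $\theta(n/\epsilon)^{e^{-1/3}\epsilon^2 n} \ge \theta(n/\epsilon)^{\epsilon^2 n/4}$, which is the wrong direction. Constant-absorption works only for the leading term, $\exp(-k/2) \le \exp(-\epsilon^2 n/4)$, which is the only one you checked. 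What your argument actually proves is $\Pr[|X-p|>\epsilon] \le \exp(-\epsilon^2 n/4) + \theta(n/\epsilon)^{e^{-1/3}\epsilon^2 n}$, which is weaker than the displayed claim. To be fair, this gap is inherited from the paper: Fact~\ref{fact:sampling} as stated is strictly stronger than what Facts~\ref{fact:LimitedChernoff} and \ref{fact:smallbias} deliver, and the discrepancy is harmless downstream because $\theta$ is a free parameter — in Algorithm~\ref{algo:efficient} one should simply define $\theta$ with exponent $e^{-1/3}\epsilon^2 n$ (or $\epsilon^2 n$) in place of $\epsilon^2 n / 4$, which changes $r$ by only a constant factor. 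A secondary caveat: your ``pad $\cY$ to a power of two'' remark is not innocuous, since padding changes the uniform distribution on $\cY$ and hence the target mean $p$; the standard repair is a nearly balanced surjection $\{0,1\}^m \rightarrow \cY$ with $m = \log|\cY| + O(\log (n\theta^{-1}))$ bits per block, folding the resulting per-block non-uniformity into the $\ell_1$ slack $\theta$.
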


\section{Bounding the tail of the gate fidelity distribution}
\label{sec:tail}
In this section, we show how to bound the tail of the gate fidelity
distribution, both under the Haar measure as well as under the
uniform measure on a qTPE. As a warmup, we first show that
approximate unitary $2$-designs and $2$-qTPEs are related.
\begin{fact}
\label{fact:TPEvsDesign}
A $(d, s, \lambda, 2)$-qTPE is a $(\lambda d^4)$-approximate
unitary $2$-design consisting of $s$ unitaries. 
\end{fact}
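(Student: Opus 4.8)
The plan is to show that the $\nu$-twirling superoperator $E_\nu$ induced by a $(d,s,\lambda,2)$-qTPE is close in diamond norm to the Haar twirling $E_{\mathrm{Haar}}$, with the error controlled by $\lambda$ and a dimension factor. The conceptual bridge is that twirling is, at its heart, a $2$-sided conjugation by $V^{\otimes 1}$ acting on the input $X$ and by $V^\dagger$ on the output, which after the standard Choi/vectorisation identification becomes conjugation by $V\otimes \bar V$ — i.e. an action that lives inside the $2$-fold tensor structure controlled by the qTPE. So the first step is to write both $E_\nu(\Lambda)$ and $E_{\mathrm{Haar}}(\Lambda)$ in a form where the difference between them is exactly $(\cG-\cI)$ applied to an appropriately reshaped version of $\Lambda$.

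First I would fix a superoperator $\Lambda$ with $\lVert\Lambda\rVert_\Diamond = 1$ (by homogeneity it suffices to treat this normalisation) and pass to its Choi matrix, or equivalently vectorise the channel so that the map $X\mapsto V^\dagger \Lambda(VXV^\dagger)V$ corresponds to conjugating a fixed matrix $M_\Lambda\in L(\cH^{\otimes 2})$ by $(V\otimes \bar V)$. Under this identification, averaging over $\nu$ versus averaging over Haar is precisely $\cG(M_\Lambda)$ versus $\cI(M_\Lambda)$ for the $2$-qTPE. The second step is then to invoke the qTPE guarantee in its Frobenius form, $\lVert \cG(M_\Lambda)-\cI(M_\Lambda)\rVert_2 \le \lambda\lVert M_\Lambda\rVert_2$, and to convert this Hilbert--Schmidt bound back into a diamond-norm statement about $E_\nu(\Lambda)-E_{\mathrm{Haar}}(\Lambda)$.

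The conversions between norms are where the dimension factors enter, and I expect this to be the main obstacle — keeping the powers of $d$ honest. There are two lossy steps. Going from the normalisation $\lVert\Lambda\rVert_\Diamond = 1$ to a bound on $\lVert M_\Lambda\rVert_2$ costs a factor polynomial in $d$, since the Choi matrix of a bounded channel can have Frobenius norm as large as roughly $d$. Coming back, converting $\lVert\,\cdot\,\rVert_2$ on the difference of the reshaped operators into $\lVert\,\cdot\,\rVert_\Diamond$ on the difference of superoperators costs another polynomial-in-$d$ factor, because the diamond norm dominates the trace norm which in turn dominates Frobenius only up to a $\sqrt{d^2}=d$-type factor on $d^2$-dimensional operators. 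Multiplying these together with the bare $\lambda$ from the qTPE is what should produce the claimed $\lambda d^4$. So the key calculation is a careful bookkeeping of: $\lVert E_\nu(\Lambda)-E_{\mathrm{Haar}}(\Lambda)\rVert_\Diamond \le (\text{out factor})\cdot\lambda\cdot\lVert M_\Lambda\rVert_2 \le (\text{out factor})\cdot\lambda\cdot(\text{in factor})\cdot\lVert\Lambda\rVert_\Diamond$, with the two factors together giving $d^4$.

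I would close by checking the definitional match: the $\nu$-twirl in \eqref{eq:gnu} uses $V^\dagger(\cdots)V$ on the output and $V(\cdots)V^\dagger$ on the input, which is exactly the double conjugation by $V$ and $V^\dagger$ that the $t=2$ tensor structure of the qTPE encodes, so no extra conjugate-transpose bookkeeping is lost. The main risk in this argument is that a naive norm conversion overcounts the dimension dependence; the honest goal is to verify that each inequality is tight enough that the product is $d^4$ rather than something larger, and to confirm that the qTPE's Frobenius bound (rather than the stronger operator-norm bound $\lVert\cG-\cI\rVert_\infty\le\lambda$) is the right one to feed in at this step.
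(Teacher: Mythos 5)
Your route --- pass to the Choi/transfer-matrix picture, apply the qTPE gap in Frobenius form, convert norms back --- is genuinely different from the paper's proof, and it can be made to work, but it has a real gap at exactly the step you declare to be exact. The paper's $2$-qTPE hypothesis controls only the map $M \mapsto s^{-1}\sum_i U_i^{\otimes 2} M (U_i^{-1})^{\otimes 2}$, i.e.\ conjugation by the \emph{same} unitary on both tensor factors, whereas the channel twirl, under the vectorisation you set up, is conjugation by $U_i \otimes \bar{U}_i$. These are different superoperators attached to different representations of $\U(d)$: their Haar averages do not even have the same fixed-point space (identity and swap for $U^{\otimes 2}$, versus identity and the unnormalised maximally entangled projector for $U \otimes \bar{U}$). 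So the claim that averaging over $\nu$ versus Haar ``is precisely $\cG(M_\Lambda)$ versus $\cI(M_\Lambda)$'' is false as literally stated, and the inequality $\lVert \cG(M_\Lambda) - \cI(M_\Lambda) \rVert_2 \le \lambda \lVert M_\Lambda \rVert_2$ you want to feed in is not, as written, the qTPE hypothesis. The gap is fillable by a short lemma: letting $\Theta$ denote partial transposition on the second tensor factor, one checks $(U \otimes \bar{U}) M (U \otimes \bar{U})^\dagger = \Theta\bigl(U^{\otimes 2}\, \Theta(M)\, (U^{\dagger})^{\otimes 2}\bigr)$, the same identity holds under the Haar integral, and $\Theta$ preserves $\lVert \cdot \rVert_2$ since it merely permutes matrix entries; hence the $U \otimes \bar{U}$ twirl deviates from its Haar version by at most $\lambda$ in the induced $2$-norm as well. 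But this transfer-of-gap argument is the entire content of the reduction and must be supplied, not asserted.

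Once that lemma is in place, your norm bookkeeping closes and is in fact better than you anticipate: writing $J(\Phi)$ for the Choi matrix of a superoperator $\Phi$, one has $\lVert \Phi \rVert_\Diamond \le \lVert J(\Phi) \rVert_1 \le d\, \lVert J(\Phi) \rVert_2$ on the way out and $\lVert J(\Lambda) \rVert_2 \le \lVert J(\Lambda) \rVert_1 \le d\, \lVert \Lambda \rVert_\Diamond$ on the way in, so you get $\lambda d^2$, which certainly implies the claimed $\lambda d^4$. The paper's proof avoids the representation mismatch altogether: it expresses the diamond norm by trace-norm duality as a maximum over unitaries $V$ of $\lvert \Tr[(\cdots) V] \rvert$, uses cyclicity of the trace to move the outer $U_i^{-1}(\cdot)U_i$ onto the dual unitary $V$, and then the swap trick $\Tr[MN] = \Tr[(M \otimes N)S]$ makes both conjugations appear as a single \emph{forward} conjugation by $U_i$ on two of the four factors of $X \otimes V \in L(\cH^{\otimes 4})$, so the qTPE hypothesis applies verbatim; the factors $d^2$ (from $\lVert \cdot \rVert_1 \le \sqrt{d^4}\, \lVert \cdot \rVert_2$) and $d^2$ (from $\lVert V \rVert_1 = d^2$) then produce $d^4$. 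A final small correction: the paper's condition $\lVert \cG - \cI \rVert_\infty \le \lambda$ \emph{is} the Frobenius statement $\lVert \cG(M) - \cI(M) \rVert_2 \le \lambda \lVert M \rVert_2$ (the $\infty$ refers to the superoperator norm induced by $\lVert \cdot \rVert_2$), so the distinction you worry about between it and a ``stronger operator-norm bound'' does not exist.
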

\begin{proof}
Let $S$
denote the swap operator on $\cH^{\otimes 2} \otimes \cH^{\otimes 2}$ 
which swaps the two multiplicands of the central tensor product symbol. 
Swap is a unitary operation.
Let $V$ be a unitary operator, and $X$ a linear operator 
on $\cH \otimes \cH$. Let $\one$ denote the identity operator on $\cH$.
Let $\I$ be the identity superoperator on $L(\cH)$. 
Let $\Lambda: L(\cH) \rightarrow L(\cH)$ be a superoperator. Then,
\begin{eqnarray*}
\lefteqn{
\left|
\Tr [
(
(
s^{-1} \sum_{i=1}^s 
(U_i^{-1} \otimes \one)
((\Lambda \otimes \I)((U_i \otimes \one) X (U_i^{-1} \otimes \one))) 
(U_i \otimes \one)
)
\right.
} \\
\lefteqn{
~~~~~~~~
\left.
{} -
(
\int_{\U(d)} 
(U^{-1} \otimes \one)
((\Lambda \otimes \I)((U \otimes \one) X (U^{-1} \otimes \one))) 
(U \otimes \one) \, d\,\Haar(U)
)
) V
] 
\right|
} \\
& = &
\left|
s^{-1} \sum_{i=1}^s 
\Tr [
((\Lambda \otimes \I)((U_i \otimes \one) X (U_i^{-1} \otimes \one))) 
(U_i \otimes \one) V (U_i^{-1} \otimes \one)
] 
\right. \\
&    &
~~~~
\left.
{} -
\int_{\U(d)} 
\Tr [
((\Lambda \otimes \I)((U \otimes \one) X (U^{-1} \otimes \one))) 
(U \otimes \one) V (U^{-1} \otimes \one)
] \, d\,\Haar(U)
\right| \\
& = &
\left|
s^{-1} \sum_{i=1}^s 
\Tr [
(
((\Lambda \otimes \I)((U_i \otimes \one) X (U_i^{-1} \otimes \one))) 
\otimes
((U_i \otimes \one) V (U_i^{-1} \otimes \one))
) S
] 
\right. \\
&    &
~~~~
\left.
{} -
\int_{\U(d)} 
\Tr [
(
((\Lambda \otimes \I)((U \otimes \one) X (U^{-1} \otimes \one))) 
\otimes 
((U \otimes \one) V (U^{-1} \otimes \one))
) S
] \, d\,\Haar(U)
\right| \\
& = &
\left|
\Tr [
(
((\Lambda \otimes \I) \otimes (\I \otimes \I))
(
s^{-1} \sum_{i=1}^s 
((U_i \otimes \one) \otimes (U_i \otimes \one))
(X \otimes V) 
((U_i^{-1} \otimes \one) \otimes (U_i^{-1} \otimes \one))
)
) S
] 
\right. \\
&    &
~~~~
\left.
{} -
\Tr [
(
((\Lambda \otimes \I) \otimes (\I \otimes \I))
(
\int_{\U(d)} 
((U \otimes \one) \otimes (U \otimes \one))
(X \otimes V) 
((U^{-1} \otimes \one) \otimes (U^{-1} \otimes \one))\, 
d\,\Haar(U)
) 
) S
] 
\right| \\
& = &
|
\Tr [
(
((\Lambda \otimes \I) \otimes (\I \otimes \I)) 
(
(
(\cG \otimes' (\I \otimes'' \I)) -
(\cI \otimes' (\I \otimes'' \I))
)
(X \otimes V)
)
) S
]
| \\
& \leq &
\lVert
((\Lambda \otimes \I) \otimes (\I \otimes \I)) 
(
(
(\cG \otimes' (\I \otimes'' \I)) -
(\cI \otimes' (\I \otimes'' \I))
)
(X \otimes V)
)
\rVert_1 \\
& \leq &
\lVert \Lambda \rVert_\Diamond \cdot
\lVert
(
(\cG \otimes' (\I \otimes'' \I)) -
(\cI \otimes' (\I \otimes'' \I))
)
(X \otimes V)
\rVert_1 \\
& \leq &
d^2 \lVert \Lambda \rVert_\Diamond \cdot
\lVert X \otimes V \rVert_1 \cdot
\lVert
(\cG \otimes' (\I \otimes'' \I)) -
(\cI \otimes' (\I \otimes'' \I))
\rVert_\infty \\
&   =  &
d^4 \lVert \Lambda \rVert_\Diamond 
\lVert X \rVert_1 \cdot
\lVert \cG - \cI \rVert_\infty 
\;  = \;
\lambda d^4 \lVert \Lambda \rVert_\Diamond 
\lVert X \rVert_1.
\end{eqnarray*}
Above, we used the so-called {\em swap trick} 
$\Tr [M N] = \Tr [(M \otimes N) S]$ in the second equality, 
$\otimes'$, $\otimes''$ in the fourth equality indicate that the 
splitting of the tensor multiplicands is different from the splitting
in $X \otimes V$, the fact that 
$
\lVert \cG(X) \rVert_1 \leq 
\sqrt{|\cH|} \lVert X \rVert_1 \lVert \cG \rVert_\infty
$
for an operator $X \in L(\cH)$ and superoperator 
$\cG: L(\cH) \rightarrow L(\cH)$
in the third inequality, and $\lVert V \rVert_1 = d^2$ in the fifth
equality. Since 
\begin{eqnarray*}
\lefteqn{
\lVert
E_\cG(\Lambda) - E_{\Haar}(\Lambda)
\rVert_\Diamond
} \\
&  = &
\max_{X: \lVert X \rVert_1 = 1}
\lVert
((E_\cG(\Lambda) - E_{\Haar}(\Lambda)) \otimes \I)(X)
\rVert_1 \\
&  = &
\max_{X: \lVert X \rVert_1 = 1}
\max_{V: \mbox{unitary}}
|
\Tr [
(((E_\cG(\Lambda) - E_{\Haar}(\Lambda)) \otimes \I)(X)) V
]
| \\
&  = &
\max_{X: \lVert X \rVert_1 = 1}
\max_{V: \mbox{unitary}}
\left|
\Tr [
(
(
s^{-1} \sum_{i=1}^s 
(U_i^{-1} \otimes \one)
((\Lambda \otimes \I)((U_i \otimes \one) X (U_i^{-1} \otimes \one))) 
(U_i \otimes \one)
)
\right. \\
&   &
~~~~~~~~~~~~~~~~~~~~~~~~~~~~~~~~~~~~~
\left.
{} -
(
\int_{\U(d)} 
(U^{-1} \otimes \one)
((\Lambda \otimes \I)((U \otimes \one) X (U^{-1} \otimes \one))) 
(U \otimes \one) \, d\,\Haar(U)
)
) V
] 
\right|,
\end{eqnarray*}
we get
$
\lVert
E_\cG(\Lambda) - E_{\Haar}(\Lambda)
\rVert_\Diamond \leq 
\lambda d^4 \lVert \Lambda \rVert_\Diamond.
$
This completes the proof.
\end{proof}

We now prove four lemmas that will help us relate the tail  of the
distribution of gate fidelity calculated with respect to a unitary
chosen from the Haar measure versus chosen from a $t$-qTPE.
\begin{lemma} 
\label{lem:Kraus}
Let $\{A_k\}_k$ be Kraus operators 
of quantum operation $\Lambda$, i.e., 
$\Lambda(\rho) = \sum_k A_k \rho A_k^{\dagger}$. 
We will use $\cF(V)$ as a shorthand for $\cF_{\Lambda}(V)$. Let
$l$ be a positive integer. Let $a \in \C$. Define 
$
M \triangleq 
(\sum_k A_k \otimes A_k^\dagger) -
a \one_{d^2}.
$
Then, 
\[
\cF(V) - a = 
\Tr[
(
V^{\dagger})^{\otimes 2}
M
V^{\otimes 2}
)
\ket{00}\bra{00}
],
\]
and
$
(\cF(V) - a)^l = 
\Tr[
(
(V^{\dagger})^{\otimes 2l}
M^{\otimes l}
V^{\otimes 2l}
)
\ket{0^{2l}}\bra{0^{2l}}
].
$
\end{lemma}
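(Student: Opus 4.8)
The plan is to reduce everything to the scalar identity $\cF(V) - a = \bra{00} O \ket{00}$ for the single operator $O := (V^\dagger)^{\otimes 2} M V^{\otimes 2}$ on $\cH^{\otimes 2}$, and then to bootstrap to the $l$-th power by a tensor-power argument. Throughout I would freely replace $\Tr[O\,\ket{\phi}\bra{\phi}]$ by $\bra{\phi}O\ket{\phi}$.

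First I would unpack the definition $\cF(V) = \bra{0} V^{-1} \Lambda(V\ket{0}\bra{0}V^{-1}) V\ket{0}$, using $V^{-1} = V^\dagger$ and the Kraus representation $\Lambda(\rho) = \sum_k A_k \rho A_k^\dagger$. Setting $c_k := \bra{0} V^\dagger A_k V \ket{0}$ and noting that $\bra{0} V^\dagger A_k^\dagger V \ket{0} = \overline{c_k}$, this collapses to $\cF(V) = \sum_k |c_k|^2$. To establish the first identity I then evaluate $\bra{00}O\ket{00}$ termwise. The contribution of $-a\one_{d^2}$ is $-a\,\bra{00} (V^\dagger)^{\otimes 2} V^{\otimes 2}\ket{00} = -a\,\bra{00}(V^\dagger V)^{\otimes 2}\ket{00} = -a$, since $V$ is unitary. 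For the term $\sum_k A_k \otimes A_k^\dagger$ I use $(V^\dagger)^{\otimes 2}(A_k \otimes A_k^\dagger) V^{\otimes 2} = (V^\dagger A_k V)\otimes(V^\dagger A_k^\dagger V)$ together with $\bra{00}(P\otimes Q)\ket{00} = (\bra 0 P\ket 0)(\bra 0 Q\ket 0)$, recovering $\sum_k c_k\overline{c_k} = \cF(V)$. This gives $\cF(V) - a = \bra{00}O\ket{00}$.

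For the $l$-th power, the core observation is that for any operator $O$ and any unit vector $\ket{\phi}$ one has $(\bra{\phi}O\ket{\phi})^l = \bra{\phi}^{\otimes l}\, O^{\otimes l}\, \ket{\phi}^{\otimes l}$, because the left side is a product of $l$ scalars, each of which is a copy of $\bra{\phi}O\ket{\phi}$ acting on its own tensor slot. Applying this with $O = (V^\dagger)^{\otimes 2} M V^{\otimes 2}$ and $\ket{\phi} = \ket{00}$, it remains to simplify $O^{\otimes l}$. By multiplicativity of the tensor product over composition, $O^{\otimes l} = \big((V^\dagger)^{\otimes 2}\big)^{\otimes l}\, M^{\otimes l}\, \big(V^{\otimes 2}\big)^{\otimes l}$, and under the canonical identification $(\cH^{\otimes 2})^{\otimes l} \cong \cH^{\otimes 2l}$ this is exactly $(V^\dagger)^{\otimes 2l}\, M^{\otimes l}\, V^{\otimes 2l}$. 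Since $\ket{00}^{\otimes l} = \ket{0^{2l}}$ under the same identification, the second identity follows.

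The only genuinely delicate point is the bookkeeping of tensor factors: I must check that the grouped operator $\big((V^\dagger)^{\otimes 2}\big)^{\otimes l}$ coincides, after the natural reindexing of the $2l$ tensor slots, with the flat operator $(V^\dagger)^{\otimes 2l}$, and likewise that in $O^{\otimes l}$ each copy of $M$ sits between the matching copies of $(V^\dagger)^{\otimes 2}$ and $V^{\otimes 2}$ so that no slots get crossed. Once the identification $(\cH^{\otimes 2})^{\otimes l}\cong\cH^{\otimes 2l}$ is pinned down consistently, everything else is a routine expansion of matrix elements.
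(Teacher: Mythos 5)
Your proposal is correct and follows essentially the same route as the paper's proof: both rest on the factorization of matrix elements (equivalently, traces) over tensor products, $\bra{00}(P\otimes Q)\ket{00} = \bra{0}P\ket{0}\,\bra{0}Q\ket{0}$, to obtain the first identity, and on the tensor-power identity $(\bra{\phi}O\ket{\phi})^l = \bra{\phi}^{\otimes l} O^{\otimes l}\ket{\phi}^{\otimes l}$ followed by regrouping of slots for the second. The only cosmetic difference is that you evaluate the right-hand side and reduce it to $\cF(V)-a$, while the paper transforms the left-hand side forward; the bookkeeping of the identification $(\cH^{\otimes 2})^{\otimes l}\cong\cH^{\otimes 2l}$ that you flag is handled implicitly (and harmlessly) in the paper as well.
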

\begin{proof}
We have
\begin{align*}
\cF(V) - a
& = 
\bra{0} V^{-1} \Lambda(V \ket{0}\bra{0} V^{-1}) V \ket{0} - a
\; = 
\bra{0} V^{\dagger} 
(
\sum_k A_k V \ket{0}\bra{0} V^{\dagger} A_k^{\dagger} 
)
V \ket{0} - a \\
& = 
(
\sum_k 
\bra{0} V^{\dagger} A_k V \ket{0} \cdot
\bra{0} V^{\dagger} A_k^{\dagger} V \ket{0} 
) - a
\; = 
(
\sum_k 
\Tr[(V^{\dagger} A_k V) \ket{0}\bra{0}]
\Tr[(V^{\dagger} A_k^\dagger V) \ket{0}\bra{0}] 
) - a \\
& = 
(
\sum_k 
\Tr[
((V^{\dagger} A_k V) \ket{0}\bra{0}) \otimes 
((V^{\dagger} A_k^\dagger V) \ket{0}\bra{0})
]
) - a \\
& = 
(
\sum_k 
\Tr[
(V^{\dagger} \otimes V^{\dagger}) 
(A_k \otimes A_k^{\dagger})
(V \otimes V) 
(\ket{0} \otimes \ket{0})
(\bra{0} \otimes \bra{0})
]
) - a \\ 
& = 
\sum_k 
\Tr[
(V^{\dagger} \otimes V^{\dagger}) 
(A_k \otimes A_k^{\dagger})
(V \otimes V) 
\ket{00}\bra{00}
] - a \Tr[\one_{d^2} \ket{00}\bra{00}]
\; =
\Tr[
(V^{\dagger})^{\otimes 2} M V^{\otimes 2}
\ket{00}\bra{00}
].
\end{align*}
This proves the first equality. 
For the second equality, 
\begin{eqnarray*}
(\cF(V) - a)^l 
& = &
(
\Tr[
(V^{\dagger})^{\otimes 2} M V^{\otimes 2}
\ket{00}\bra{00}
]
)^l 
\;=\;
\Tr[
(
(V^{\dagger})^{\otimes 2} M V^{\otimes 2}
\ket{00}\bra{00}
)^{\otimes l}
] \\
& = &
\Tr[
(V^{\dagger})^{\otimes 2l} M^{\otimes l} V^{\otimes 2l}
\ket{0^{2l}}\bra{0^{2l}}
].
\end{eqnarray*}
This completes the proof.
\end{proof}
\begin{lemma} 
\label{lem:Krausbound}
Under the notation of Lemma~\ref{lem:Kraus},
$\lVert M \rVert_2 \leq (1+|a|)d$ and 
$\lVert M^{\otimes l} \rVert_2 \leq ((1+|a|)d)^l$.
\end{lemma}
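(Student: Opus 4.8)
The plan is to reduce both claims to a single Frobenius-norm estimate on the operator $B \triangleq \sum_k A_k \otimes A_k^\dagger$, and then read off the two inequalities. The second inequality follows from the first with no extra work, by multiplicativity of the Schatten $2$-norm under tensor products: since $\lVert P \otimes Q \rVert_2 = \lVert P \rVert_2 \, \lVert Q \rVert_2$ for all operators $P, Q$, induction gives $\lVert M^{\otimes l} \rVert_2 = \lVert M \rVert_2^l$, so the bound $\lVert M \rVert_2 \leq (1+|a|)d$ immediately yields $\lVert M^{\otimes l} \rVert_2 \leq ((1+|a|)d)^l$.

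For the first inequality I would split $M = B - a \one_{d^2}$ and apply the triangle inequality, obtaining $\lVert M \rVert_2 \leq \lVert B \rVert_2 + |a| \, \lVert \one_{d^2} \rVert_2$. The identity term is routine, since $\lVert \one_{d^2} \rVert_2 = \sqrt{\Tr \one_{d^2}} = d$. The crux is therefore to show $\lVert B \rVert_2 \leq d$.

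For this I would compute $\lVert B \rVert_2^2 = \Tr[B^\dagger B]$ by expanding $B^\dagger B = \sum_{j,k} (A_j^\dagger A_k) \otimes (A_j A_k^\dagger)$ and using the trace factorization $\Tr[P \otimes Q] = \Tr[P] \, \Tr[Q]$ together with $\Tr[A_j A_k^\dagger] = \overline{\Tr[A_j^\dagger A_k]}$. This collapses the expression to
\[
\lVert B \rVert_2^2 = \sum_{j,k} |\Tr[A_j^\dagger A_k]|^2 = \sum_{j,k} |\langle A_j, A_k \rangle|^2 = \Tr[C^2],
\]
where $C$ is the Gram matrix with entries $C_{jk} = \langle A_j, A_k \rangle$. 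Being a Gram matrix, $C$ is Hermitian and positive semidefinite, and the Kraus completeness relation $\sum_k A_k^\dagger A_k = \one$ pins down its trace: $\Tr C = \sum_k \Tr[A_k^\dagger A_k] = \Tr[\sum_k A_k^\dagger A_k] = \Tr \one = d$. Since the eigenvalues $\sigma_i$ of $C$ are nonnegative, $\Tr[C^2] = \sum_i \sigma_i^2 \leq (\sum_i \sigma_i)^2 = (\Tr C)^2 = d^2$, giving $\lVert B \rVert_2 \leq d$ and hence $\lVert M \rVert_2 \leq (1+|a|)d$.

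I expect the only real obstacle to be bookkeeping rather than conceptual: correctly tracking the daggers and the tensor-trace factorization so that $\Tr[B^\dagger B]$ reduces exactly to $\sum_{j,k} |\langle A_j, A_k \rangle|^2$, and then recognising this sum as $\Tr[C^2]$ for a positive semidefinite Gram matrix $C$. Once that identification is made, invoking trace preservation to fix $\Tr C = d$ and the elementary bound $\Tr[C^2] \leq (\Tr C)^2$ for positive semidefinite matrices closes the argument cleanly.
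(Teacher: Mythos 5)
Your proof is correct and follows essentially the same route as the paper's: the same splitting $M = B - a\one_{d^2}$ followed by the triangle inequality, the same expansion of $\Tr[B^\dagger B]$ into $\sum_{j,k}\lvert\langle A_j, A_k\rangle\rvert^2$, and the same use of the Kraus completeness relation to bound this by $d^2$; your only deviation is packaging the last step as $\Tr[C^2] \leq (\Tr C)^2$ for the positive semidefinite Gram matrix $C$, where the paper instead applies Cauchy--Schwarz entrywise via $\lvert\langle A_j,A_k\rangle\rvert^2 \leq \lVert A_j\rVert_2^2\,\lVert A_k\rVert_2^2$, two interchangeable ways of closing the identical computation. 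Your explicit appeal to multiplicativity of the Schatten $2$-norm under tensor products to get $\lVert M^{\otimes l}\rVert_2 = \lVert M\rVert_2^l$ is also what the paper leaves implicit, so if anything your write-up is slightly more complete.
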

\begin{proof}
Define $N \triangleq \sum_k A_k \otimes A_k^\dagger$.
We have
\begin{align*}
\lVert N \rVert_2 
& = 
\sqrt{\Tr[N N^{\dagger}]}
\;=
\sqrt{
\Tr[
\sum_{j,k}
(A_j \otimes A_j^{\dagger})(A_k^{\dagger} \otimes A_k)
]
} \\
& =
\sqrt{
\sum_{j,k} 
\Tr[A_j A_k^{\dagger}] 
\Tr[A_j^{\dagger} A_k]
} 
\;=
\sqrt{
\sum_{j,k} \Tr[A_j^\dagger A_k] \overline{\Tr[A_j^\dagger A_k]}
}\\
& =
\sqrt{
\sum_{j,k}
\lvert \langle A_j, A_k \rangle \rvert^2 
}
\;\leq \sqrt{\sum_{j,k} \lVert A_j \rVert_2^2 \lVert A_k \rVert_2^2}
\;=
\sum_{k} \lVert A_k \rVert_2^2
\;=
\Tr[\sum_k A_k^\dagger A_k] 
\;=
\Tr[\one_{d}] 
\;=
d,
\end{align*}
where the inequality is obtained by applying Cauchy-Schwarz to 
the Hilbert-Schmidt inner product. Then 
\[
\lVert M \rVert_2 \leq
\lVert N \rVert_2 + |a| \lVert \one_{d^2} \rVert_2 =
(1+|a|) d.
\]
This completes the proof of the lemma.
\end{proof} 
\begin{lemma} 
\label{lem:qTPEMoment}
Let $d$, $l$, $s$ be positive integers and $\lambda > 0$. Let $a \in \C$.
The notation
$\E_{V: \qTPE}[\cdot]$ denotes the expectation with respect to a
$d \times d$ unitary $V$ chosen uniformly at random from a
$(d, s, \lambda, 2l)$-qTPE.
The notation
$\E_{V: \Haar}[\cdot]$ denotes the expectation with respect to a
$d \times d$ unitary $V$ chosen from the Haar measure. Then,
\[
\lvert 
\E_{V: \qTPE}[(\cF(V) - a)^l] -
\E_{V: \Haar}[(\cF(V) - a)^l] 
\rvert \leq 
\lambda ((1+|a|)d)^l.
\]
\end{lemma}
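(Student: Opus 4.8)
The plan is to reduce everything to the second singular value property of the qTPE by rewriting both expectations via Lemma~\ref{lem:Kraus}. First I would apply the second identity of that lemma to each term, obtaining
\[
\E_{V:\qTPE}[(\cF(V)-a)^l]
= \Tr\!\left[\left(\tfrac{1}{s}\sum_{i=1}^s (U_i^\dagger)^{\otimes 2l}\, M^{\otimes l}\, (U_i)^{\otimes 2l}\right)\ket{0^{2l}}\bra{0^{2l}}\right],
\]
and the analogous identity for the Haar expectation with the average over the $U_i$ replaced by the Haar integral. The key observation is that the operator sitting in the first slot of each trace is exactly a twirl of $M^{\otimes l}$ acting on $L(\cH^{\otimes 2l})$: the qTPE average produces $\cG^\dagger(M^{\otimes l})$, where $\cG$ is the given $(d,s,\lambda,2l)$-qTPE channel and $\cG^\dagger$ is its Hilbert-Schmidt adjoint, while the Haar average produces $\cI(M^{\otimes l})$.

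The one point I would flag carefully is the orientation mismatch: Lemma~\ref{lem:Kraus} conjugates as $V^\dagger(\cdot)V$, whereas $\cG$ conjugates as $U(\cdot)U^{-1}$, so the averaged operator is $\cG^\dagger$ rather than $\cG$ (equivalently, it is the qTPE built from the inverted unitaries $\{U_i^{-1}\}$). I would resolve this by noting that the Haar twirl $\cI$ is self-adjoint and invariant under $U\mapsto U^{-1}$, whence $\cG^\dagger-\cI=(\cG-\cI)^\dagger$; since the Hilbert-Schmidt adjoint preserves singular values, $\lVert\cG^\dagger-\cI\rVert_\infty=\lVert\cG-\cI\rVert_\infty\le\lambda$. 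Thus the defining inequality of the qTPE holds for the orientation that actually appears.

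With this established the remainder is short. Subtracting the two trace expressions and pulling out the common rank one projector $\ket{0^{2l}}\bra{0^{2l}}$ gives
\[
\Big|\E_{V:\qTPE}[(\cF(V)-a)^l]-\E_{V:\Haar}[(\cF(V)-a)^l]\Big|
= \Big|\bra{0^{2l}}(\cG^\dagger-\cI)(M^{\otimes l})\ket{0^{2l}}\Big|.
\]
I would then bound the right side by $\lVert(\cG^\dagger-\cI)(M^{\otimes l})\rVert_\infty\le\lVert(\cG^\dagger-\cI)(M^{\otimes l})\rVert_2$, since $\ket{0^{2l}}$ is a unit vector, apply the qTPE inequality to get $\le\lambda\lVert M^{\otimes l}\rVert_2$, and finish with Lemma~\ref{lem:Krausbound}, which supplies $\lVert M^{\otimes l}\rVert_2\le((1+|a|)d)^l$. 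Chaining these estimates yields the claimed bound $\lambda((1+|a|)d)^l$.

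The only genuinely non-routine step is the adjoint/orientation bookkeeping in the second paragraph; everything else is a direct substitution followed by the elementary estimate $|\bra{\phi}N\ket{\phi}|\le\lVert N\rVert_\infty\le\lVert N\rVert_2$ for a unit vector $\ket{\phi}$. I expect no difficulty in the final chaining, as both required inputs are already in hand.
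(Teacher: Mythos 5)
Your proposal is correct and follows essentially the same route as the paper: substitute via Lemma~\ref{lem:Kraus}, pull the difference of expectations inside the trace against the rank-one projector $\ket{0^{2l}}\bra{0^{2l}}$, bound that trace by the Schatten $2$-norm of the twirled difference, invoke the qTPE second-singular-value property, and finish with Lemma~\ref{lem:Krausbound}. The one place you go beyond the paper is the adjoint/orientation bookkeeping (the averaged conjugation is $\cG^\dagger$ rather than $\cG$), which the paper's proof silently elides; your resolution via $\cG^\dagger - \cI = (\cG-\cI)^\dagger$ and $\lVert (\cG-\cI)^\dagger \rVert_\infty = \lVert \cG-\cI \rVert_\infty \leq \lambda$ is correct and makes that step fully rigorous.
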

\begin{proof}
Using Lemmas~\ref{lem:Kraus} and \ref{lem:Krausbound}, we get
\begin{eqnarray*}
\lefteqn{
\lvert 
\E_{V: \qTPE}[(\cF(V) - a)^l] -
\E_{V: \Haar}[(\cF(V) - a)^l] 
\rvert
} \\
& = &
|
\E_{V: \qTPE}[
\Tr[
(
(V^{\dagger})^{\otimes 2l}
M^{\otimes l}
V^{\otimes 2l}
)
\ket{0^{2l}}\bra{0^{2l}}
]
] -
\E_{V: \Haar}[
\Tr[
(
(V^{\dagger})^{\otimes 2l}
M^{\otimes l}
V^{\otimes 2l}
)
\ket{0^{2l}}\bra{0^{2l}}
]
]
| \\
& = &
|
\Tr[
(
\E_{V: \qTPE}[
(V^{\dagger})^{\otimes 2l}
M^{\otimes l}
V^{\otimes 2l}
] -
\E_{V: \Haar}[
(V^{\dagger})^{\otimes 2l}
M^{\otimes l}
V^{\otimes 2l}
] 
)
\ket{0^{2l}}\bra{0^{2l}}
] \\
& \leq &
\lVert
\E_{V: \qTPE}[
(V^{\dagger})^{\otimes 2l}
M^{\otimes l}
V^{\otimes 2l}
] -
\E_{V: \Haar}[
(V^{\dagger})^{\otimes 2l}
M^{\otimes l}
V^{\otimes 2l}
] 
\rVert_2 \\
& \leq &
\lambda \lvert M^{\otimes l} \rVert_2
\;\leq\;
\lambda ((1+|a|)d)^l.
\end{eqnarray*}
This completes the proof of the lemma.
\end{proof}
\begin{lemma} 
\label{lem:qTPEMomentAverage}
Let $t$ be a positive integer. 
Let $d$, $l$, $s$ be positive integers and $\lambda > 0$. Let 
$0 \leq a \leq 1$.
The notation
$\E_{V_1, \ldots, V_t: \qTPE}[\cdot]$ denotes the expectation with 
respect to independently choosing 
$d \times d$ unitaries $V_1, \ldots, V_t$ uniformly at random from a
$(d, s, \lambda, 2l)$-qTPE.
The notation
$\E_{V_1, \ldots, V_t: \Haar}[\cdot]$ denotes the expectation with 
respect to independently choosing
$d \times d$ unitaries $V_1, \ldots, V_t$ from the  Haar measure. 
Define the function $f: \U(d)^{\times t} \rightarrow \R$ as
\[
f(V_1, \ldots, V_t) \triangleq
t^{-1} \sum_{i=1}^t \cF_{\Lambda}(V_i),
\]
i.e., $f$ is the average of $t$ gate fidelities.
Then,
\[
|
\E_{(V_1, \ldots, V_t): \qTPE}[
(f(V_1, \ldots, V_t) - a)^{l}
] -
\E_{(V_1, \ldots, V_t): \Haar}[
(f(V_1, \ldots, V_t) - a)^{l}
] 
| \leq 
\lambda (2d)^l.
\]
\end{lemma}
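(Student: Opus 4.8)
The plan is to reduce the claim to the single-unitary moment bound of Lemma~\ref{lem:qTPEMoment} by a hybrid argument that swaps the $t$ unitaries one at a time from the qTPE distribution to the Haar distribution. Two facts will do the heavy lifting. First, each gate fidelity satisfies $\cF_{\Lambda}(V_i) \in [0,1]$ and $0 \le a \le 1$, so that $|\cF_{\Lambda}(V_i) - a| \le 1$. Second, a $(d,s,\lambda,2l)$-qTPE is simultaneously a $(d,s,\lambda,2m)$-qTPE for every $1 \le m \le l$ (by the remark following the definition of a qTPE), so Lemma~\ref{lem:qTPEMoment} applies to every moment $m \le l$ and yields $|\E_{V:\qTPE}[(\cF_{\Lambda}(V)-a)^m] - \E_{V:\Haar}[(\cF_{\Lambda}(V)-a)^m]| \le \lambda((1+|a|)d)^m \le \lambda(2d)^m$.

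Concretely, for $0 \le j \le t$ let $\E^{(j)}$ denote the expectation in which $V_1,\ldots,V_j$ are drawn from the Haar measure and $V_{j+1},\ldots,V_t$ from the qTPE, so that $\E^{(0)}$ is the fully-qTPE expectation and $\E^{(t)}$ the fully-Haar one. By the triangle inequality it suffices to bound each increment $|\E^{(j-1)}[(f-a)^l] - \E^{(j)}[(f-a)^l]|$, since these two expectations differ only in the distribution of the single unitary $V_j$. Fixing all the other unitaries, I would write $f - a = G_j + c$, where $G_j := t^{-1}(\cF_{\Lambda}(V_j)-a)$ carries the sole dependence on $V_j$ and $c := t^{-1}\sum_{i \ne j}(\cF_{\Lambda}(V_i)-a)$ is a constant with $|c| \le (t-1)/t$. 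Expanding $(f-a)^l = (G_j+c)^l$ by the binomial theorem, the $m=0$ term is independent of $V_j$ and cancels in the qTPE-minus-Haar difference, while each surviving term is controlled by the single-moment bound above, weighted by $\binom{l}{m}t^{-m}|c|^{l-m}$. Summing the resulting binomial series with $|c| \le (t-1)/t$ gives the pointwise estimate
\[
|\E_{V_j:\qTPE}[(f-a)^l] - \E_{V_j:\Haar}[(f-a)^l]| \le \lambda\left[\left(\frac{2d+t-1}{t}\right)^l - \left(\frac{t-1}{t}\right)^l\right],
\]
valid for every choice of the remaining unitaries. Taking expectation over the remaining unitaries and summing the $t$ increments yields
\[
|\E_{(V_1,\ldots,V_t):\qTPE}[(f-a)^l] - \E_{(V_1,\ldots,V_t):\Haar}[(f-a)^l]| \le \lambda\, t^{1-l}\left[(2d+t-1)^l - (t-1)^l\right].
\]

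The main obstacle, and the only genuinely quantitative step, is to show that this last quantity never exceeds the target $\lambda(2d)^l$, i.e. that $t^{1-l}[(2d+t-1)^l - (t-1)^l] \le (2d)^l$ for all integers $t \ge 1$, $l \ge 1$ and all $d \ge 1$. Writing $u := 2d \ge 2$ and $v := t-1 \ge 0$, this is the elementary inequality $(u+v)^l - v^l \le u^l (v+1)^{l-1}$. I would prove it by expanding both sides: after the substitution $k = l-j$ the left-hand side becomes $\sum_{j=0}^{l-1}\binom{l}{j}u^{l-j}v^j$ and the right-hand side is $\sum_{j=0}^{l-1}\binom{l-1}{j}u^l v^j$, so it suffices to compare coefficients of $v^j$, i.e. to check $\binom{l}{j} \le \binom{l-1}{j}u^j$ for each $0 \le j \le l-1$. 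Since $\binom{l}{j} = \frac{l}{l-j}\binom{l-1}{j}$ and $u \ge 2$, this reduces to $\frac{l}{l-j} \le 2^j$, which is equivalent to $l \ge \frac{j\,2^j}{2^j-1}$ and follows from $l \ge j+1$ together with the elementary bound $2^j \ge j+1$. This completes the argument; note that the case $t=1$ recovers Lemma~\ref{lem:qTPEMoment} with equality, confirming that $\lambda(2d)^l$ cannot be improved by this method.
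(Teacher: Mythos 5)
Your proof is correct, and it rests on the same key ingredient as the paper's proof---Lemma~\ref{lem:qTPEMoment}, applied to moments $m \le l$ via the monotonicity of qTPEs---but it organizes the reduction differently. The paper expands $(f-a)^l$ by the multinomial theorem over all $t$ variables first, so that independence turns each term into a product of single-unitary moments, and then telescopes each such product from qTPE to Haar one factor at a time; the spectator factors are (silently) bounded by $1$ using $|\cF_{\Lambda}(V)-a|\le 1$, and the constant $\lambda((1+a)d)^l\le\lambda(2d)^l$ falls out of the multinomial identity $\sum \binom{l}{i_1\cdots i_t}=t^l$ together with the partition fact $\sum_m x^{i_m}\le x^l$. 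You do the hybrid first---swapping the $t$ unitaries one at a time in the joint distribution---and only then expand binomially in the single swapped variable. This is equally rigorous, but the clean constant is no longer automatic: you must separately prove the elementary inequality $(u+v)^l - v^l \le u^l(v+1)^{l-1}$ with $u=2d$, $v=t-1$, which you do correctly via $\binom{l}{j}=\frac{l}{l-j}\binom{l-1}{j}$ and $2^j\ge j+1$. What your route buys is explicitness and robustness: the uses of boundedness and of qTPE monotonicity are invoked exactly where needed, and your per-hybrid bound is uniform over the spectator unitaries, so their mixed (part Haar, part qTPE) distributions cause no complications. What the paper's route buys is a shorter quantitative endgame, with all the counting done at once by the multinomial identity (though its final displayed line contains a typo, $\lambda(2d)^{2l}$, where the argument actually yields $\lambda(2d)^l$ as claimed). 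Your closing observation that $t=1$ saturates the bound applies equally to both arguments.
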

\begin{proof}
From Lemma~\ref{lem:qTPEMoment}, it is easy to see that
\begin{eqnarray*}
\lefteqn{
|
\E_{(V_1, \ldots, V_t): \qTPE}[
(f(V_1, \ldots, V_t) - a)^{l}
] -
\E_{(V_1, \ldots, V_t): \Haar}[
(f(V_1, \ldots, V_t) - a)^{l}
] 
|
} \\
&   =  &
t^{-l} |
\E_{(V_1, \ldots, V_t): \qTPE}[
(\sum_{i=1}^t \cF_{\Lambda}(V_i) - a)^{l}
] -
\E_{(V_1, \ldots, V_t): \Haar}[
(\sum_{i=1}^t \cF_{\Lambda}(V_i) - a)^{l}
] 
| \\
& \leq &
t^{-l} 
\sum_{i_1, \ldots, i_t: \sum_{j=1}^t i_j = l}
{l \choose i_1 \cdots i_t}
|
\prod_{j=1}^t
\E_{V_j: \qTPE}[
(\cF_{\Lambda}(V_j) - a)^{i_j}
] -
\prod_{j=1}^t
\E_{V_j: \Haar}[
(\cF_{\Lambda}(V_j) - a)^{i_j}
]
| \\
& \leq &
t^{-l} 
\sum_{i_1, \ldots, i_t: \sum_{j=1}^t i_j = l}
{l \choose i_1 \cdots i_t}
\sum_{m=t}^1
|
\prod_{j=1}^{m}
\E_{V_j: \qTPE}[
(\cF_{\Lambda}(V_j) - a)^{i_j}
] 
\prod_{j=m+1}^t
\E_{V_j: \Haar}[
(\cF_{\Lambda}(V_j) - a)^{i_j}
] \\
&     &
~~~~~~~~~~~~~~~~~~~~~~~~~~~~~~~~~~~~~~~~~~~~~~
{} -
\prod_{j=1}^{m-1}
\E_{V_j: \qTPE}[
(\cF_{\Lambda}(V_j) - a)^{i_j}
] 
\prod_{j=m}^t
\E_{V_j: \Haar}[
(\cF_{\Lambda}(V_j) - a)^{i_j}
]
| \\
&   =  &  
t^{-l} 
\sum_{i_1, \ldots, i_t: \sum_{j=1}^t i_j = l}
{l \choose i_1 \cdots i_t}
\sum_{m=t}^1
|
\prod_{j=1}^{m-1}
\E_{V_j: \qTPE}[
(\cF_{\Lambda}(V_j) - a)^{i_j}
] 
\prod_{j=m+1}^t
\E_{V_j: \Haar}[
(\cF_{\Lambda}(V_j) - a)^{i_j}
] \\
&    &
~~~~~~~~~~~~~~~~~~~~~~~~~~~~~~~~~~~~~~~~~~~~~~~~~~~~~~~~
(
\E_{V_m: \qTPE}[
(\cF_{\Lambda}(V_j) - a)^{i_m}
] -
\E_{V_m: \Haar}[
(\cF_{\Lambda}(V_j) - a)^{i_m}
]
)
| \\
& \leq &  
t^{-l} 
\sum_{i_1, \ldots, i_t: \sum_{j=1}^t i_j = l}
{l \choose i_1 \cdots i_t}
\sum_{m=t}^1
|
(
\E_{V_m: \qTPE}[
(\cF_{\Lambda}(V_j) - a)^{i_m}
] -
\E_{V_m: \Haar}[
(\cF_{\Lambda}(V_j) - a)^{i_m}
]
)
| \\
& \leq &  
t^{-l} \lambda ((1+a)d)^{l}
\sum_{i_1, \ldots, i_t: \sum_{j=1}^t i_j = l}
{l \choose i_1 \cdots i_t} \\
& \leq &
\lambda (2d)^{2l},
\end{eqnarray*}
where in the fourth inequality we used the fact that if 
$(i_1, \ldots, i_p)$ is a partition of
$l$ where each $i_m \neq 0$, $1 \leq m \leq p$,  then for any $x > 1$,
$
\sum_{m=1}^p x^{i_m} \leq 
\prod_{m=1}^p x^{i_m} \leq x^l. 
$
This completes the proof of the lemma.
\end{proof}

We now prove the following important result giving a tail bound
on the uniform average of $t$ gate fidelity functions,
when the $t$ unitaries are chosen independently and uniformly from
a qTPE.
\begin{proposition}
\label{prop:conc}
Let $t$, $d$, $s$, $l$ be  positive integers and $\delta, \lambda > 0$.
Consider the probability distribution on points
$(V_1, \ldots, V_t) \in \U(d)^{\times t}$ obtained by choosing
each $V_i$ independently and uniformly from a $(d, s, \lambda, 4l)$-qTPE.
Define the function $f: \U(d)^{\times t} \rightarrow \R$ as
\[
f(V_1, \ldots, V_t) \triangleq
t^{-1} \sum_{i=1}^t \cF_{\Lambda}(V_i),
\]
i.e., $f$ is the average of $t$ gate fidelities.
Then,
\[
\Pr_{(V_1, \ldots, V_t): \qTPE}[
|f(V_1, \ldots, V_t) - \bcF_{\Lambda}| > \delta
] \leq
\delta^{-2l} (
4 (\frac{256 l}{d t})^l + 
\lambda (2 d)^{2l}
).
\]
For the special case where $l = 1$, we get
\[
\Pr_{(V_1, \ldots, V_t): \qTPE}[
|f(V_1, \ldots, V_t) - \bcF_{\Lambda}| > \delta
] \leq
\delta^{-2} (
\frac{26}{d t} + 
\lambda (2 d)^{2}
).
\]
\end{proposition}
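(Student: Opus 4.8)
The plan is to control the deviation of $f$ from its Haar mean $\bcF_\Lambda$ through the $2l$-th moment of $f$ computed under the qTPE, and then to transfer this moment from the qTPE to the Haar measure using the machinery already established. First I would invoke Markov's inequality on the nonnegative random variable $(f(V_1,\ldots,V_t) - \bcF_\Lambda)^{2l}$, which gives $\Pr_{\qTPE}[|f - \bcF_\Lambda| > \delta] \leq \delta^{-2l}\,\E_{\qTPE}[(f - \bcF_\Lambda)^{2l}]$. This explains why the hypothesis demands a $(d,s,\lambda,4l)$-qTPE: estimating the $2l$-th moment of $f$ will ultimately call on the moment-comparison lemma at exponent $2l$, which in turn requires a $4l$-qTPE.

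The second step is to replace the qTPE expectation by the Haar expectation. I would apply Lemma~\ref{lem:qTPEMomentAverage} with the exponent there set to $2l$ and with $a = \bcF_\Lambda$, noting that $a \in [0,1]$ as required since it is an average fidelity. This bounds $|\E_{\qTPE}[(f - \bcF_\Lambda)^{2l}] - \E_{\Haar}[(f - \bcF_\Lambda)^{2l}]|$ by $\lambda(2d)^{2l}$, which is precisely the second term in the claimed estimate. It therefore remains only to bound the genuinely Haar-random central moment $\E_{\Haar}[(f - \bcF_\Lambda)^{2l}]$.

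For the Haar moment I would feed the sub-Gaussian tail bound~(\ref{eq:conc}) into the layer-cake identity $\E[Y^{2l}] = \int_0^\infty 2l\,s^{2l-1}\Pr[Y > s]\,ds$ for the nonnegative variable $Y = |f - \bcF_\Lambda|$. Substituting $\Pr[Y > s] \leq 4\exp(-s^2 d t/256)$ and evaluating the resulting Gamma integral gives $\E_{\Haar}[(f - \bcF_\Lambda)^{2l}] \leq 4\,l!\,(256/(dt))^l$, and the crude estimate $l! \leq l^l$ converts this into $4(256l/(dt))^l$, matching the first term. Combining the three steps yields the general inequality. The only mildly delicate point is the passage from tail bound to moment via the integral, but this is a routine computation; everything else is a direct substitution into the lemmas already proved.

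Finally, for the special case $l = 1$ the generic argument would produce the constant $4\cdot 256 = 1024$ in place of $26$, so here I would instead bound the Haar second moment directly. Since $\E_{\Haar}[(f - \bcF_\Lambda)^2]$ is exactly the variance of $f$ under the Haar measure, and since the $V_i$ are independent so that $\Var_{\Haar}[f] = t^{-1}\Var_V[\cF_\Lambda(V)]$, the variance bound~(\ref{eq:variance}) gives $\E_{\Haar}[(f - \bcF_\Lambda)^2] \leq 26/(dt)$. Adding the $\lambda(2d)^2$ term from Lemma~\ref{lem:qTPEMomentAverage} at $l = 1$ (which uses the $4$-qTPE hypothesis) and applying Markov's inequality at exponent $2$ then produces the sharper stated bound for $l = 1$.
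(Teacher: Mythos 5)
Your proposal is correct and follows essentially the same route as the paper: Markov's inequality at exponent $2l$, transfer of the $2l$-th moment from the qTPE to the Haar measure via Lemma~\ref{lem:qTPEMomentAverage} with $a = \bcF_\Lambda$ (using the $4l$-qTPE hypothesis), a bound on the Haar central moment, and the variance bound~(\ref{eq:variance}) for the sharper $l=1$ case. The only difference is that where the paper obtains $\E_{\Haar}[(f-\bcF_\Lambda)^{2l}] \leq 4(256l/(dt))^l$ by citing the moment method of Bellare--Rompel and Low combined with Equation~(\ref{eq:conc}), you carry out the tail-to-moment layer-cake integration explicitly, obtaining $4\,l!\,(256/(dt))^l \leq 4(256l/(dt))^l$, which is a legitimate, self-contained verification of the same constant.
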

\begin{proof}
By Lemma~\ref{lem:qTPEMomentAverage},
\[
|
\E_{(V_1, \ldots, V_t): \qTPE}[
(f(V_1, \ldots, V_t) - \bcF_{\Lambda})^{l}
] -
\E_{(V_1, \ldots, V_t): \Haar}[
(f(V_1, \ldots, V_t) - \bcF_{\Lambda})^{l}
] 
| \leq 
\lambda (2d)^l.
\]
From Equation~\ref{eq:variance}, we get that
\[
\E_{(V_1, \ldots, V_t): \Haar}[
(f(V_1, \ldots, V_t) - \bcF_{\Lambda})^{2}
] =
t^{-1} \E_{V: \Haar}[
(\cF_{\Lambda}(V) - \bcF_{\Lambda})^{2}
] =
\frac{26}{d t}.
\]

Observe that
\begin{eqnarray*}
\lefteqn{
\Pr_{(V_1, \ldots, V_t): \qTPE}[
|f(V_1, \ldots, V_t) - \bcF_{\Lambda}| > \delta
] 
} \\
& \leq &
\delta^{-2l}
\E_{(V_1, \ldots, V_t): \qTPE}[
(f(V_1, \ldots, V_t) - \bcF_{\Lambda})^{2l}
] 
\;\leq\;
\delta^{-2l} (
\E_{(V_1, \ldots, V_t): \Haar}[
(f(V_1, \ldots, V_t) - \bcF_{\Lambda})^{2l}
] + 
\lambda (2 d)^{2l}
).
\end{eqnarray*}

Now for $l=1$ we have
\[
\Pr_{(V_1, \ldots, V_t): \qTPE}[
|f(V_1, \ldots, V_t) - \bcF_{\Lambda}| > \delta
] \leq 
\delta^{-2l} (
\frac{26}{d t} + 
\lambda (2 d)^{2}
).
\]
For larger values of $l$, we employ the $l$-moment method of
\cite{BellareRompel} as adapted into the quantum setting by
Low~\cite{Low}, combined with the tail bound of Equation~\ref{eq:conc}
for the Haar measure. We obtain
\[
\E_{(V_1, \ldots, V_t): \Haar}[
(f(V_1, \ldots, V_t) - \bcF_{\Lambda})^{2l}
] \leq
4 (\frac{256 l}{d t})^l. 
\]
This gives us
\[
\Pr_{(V_1, \ldots, V_t): \qTPE}[
|f(V_1, \ldots, V_t) - \bcF_{\Lambda}| > \delta
] \leq 
\delta^{-2l} (
4 (\frac{256 l}{d t})^l + 
\lambda (2 d)^{2l}
).
\]
This completes the proof of the proposition.
\end{proof}

\section{Earlier work on estimating average gate fidelity}
\label{sec:dankert}
We first recall the naive algorithm \cite{EmersonAlickiZyczkowski} 
for estimating average gate fidelity
using Haar random unitaries for reference and comparison. The basic
procedure is the following. 
\begin{algorithm}[!!!hhh]
\label{algo:naivebasicprocedure}
\SetAlgoLined
\caption{Basic procedure of naive algorithm}
\begin{enumerate}

\item 
Start with the state $\ket{0} \in \mathcal{H}$;

\item 
Apply a $d \times d$ Haar-random unitary matrix 
$V \in \U(d)$ on $\cH$; 

\item 
Apply the quantum operation $\Lambda$ (the experimental realisation
of $U^{-1} U$) to the state obtained in the above step;

\item 
Apply $V^{-1}$ to the state obtained in the above step;

\item 
Measure the resulting state according to the binary outcome measurement
$\{\ket{0}\bra{0}, \one_{\cH} - \ket{0}\bra{0}\}$. Declare success if
the outcome $\ket{0}\bra{0}$ is observed.
\end{enumerate}
\end{algorithm}
It is easy to see that the probability of success in one iteration of
the basic procedure is, by taking $\ket{\psi} \triangleq U \ket{0}$,
\[
\int_{\U(d)} 
\bra{0} V^{-1} \Lambda(V \ket{0}\bra{0} V^{-1}) V \ket{0} \, 
d\,\Haar(V) =
\int_{\CP^{d-1}}
\bra{\psi} \Lambda(\ket{\psi}\bra{\psi}) \ket{\psi} \, 
d\,\Haar(\psi) =
\bcF_{\Lambda}.
\]
The basic procedure of the naive algorithm is prohibitively expensive, both
in terms of the computational cost required to implement the Haar random
unitaries as well as in terms of the number of
random bits required to do the sampling. Uniformly sampling a 
$d \times d$ Haar 
random unitary to within $\ell_2$-distance $\epsilon$ requires 
at least $\Omega(d^2 \log (1/\epsilon))$ random bits
and circuit size 
at least $\Omega(\frac{d^2}{\log d} \log (1/\epsilon))$
\cite[Lemma~3.5]{Vershynin}. Thus the overall number of random bits used
by the basic procedure
becomes  $\Omega(d^2 \log (1/\epsilon))$, and the overall circuit size
of the basic procedure
$\Omega(\frac{d^2}{\log d} \log (1/\epsilon))$.

Since twirling with a Haar random unitary is the same as twirling with
a uniformly random unitary chosen from an exact unitary $2$-design, 
we can replace the Haar random unitary in the basic procedure 
by a uniformly
random unitary chosen from an exact $2$-design without changing the
probability of success at all \cite{DankertCleveEmersonLivine}. 
The advantage of doing so is that there exist $2$-designs each 
of whose unitaries
can be implemented by a circuit of size $O((\log d)^2)$, whereas a
Haar random unitary almost always requires circuits of size
at least $\Omega(d^2 \log (1/\epsilon))$ to implement within precision
$\epsilon$.

One iteration of the basic procedure succeeds with probability
$\bcF_{\Lambda}$. A single outcome, success or failure, gives us no
clue about the value of $\bcF_{\Lambda}$. So in order to 
actually estimate $\bcF_{\Lambda}$
to within an additive error $\epsilon$, with confidence $1 - \delta$, 
we have to repeat the basic procedure several times. 
Neither \cite{EmersonAlickiZyczkowski} nor 
\cite{DankertCleveEmersonLivine} do this rigorously. We now address
this important shortcoming.
Let us repeat the basic 
procedure independently $\Theta(\epsilon^{-2} \log (1/\delta))$ times 
and take the empirical average 
of successes. By Fact~\ref{fact:StandardChernoff},
an estimate of $\bcF$ to within an additive error of $\epsilon$ with
probability at least $1 - \delta$.
The running time of this algorithm turns out to be
$O(\epsilon^{-2} \log \delta^{-1} (\log d)^2)$, and the number of
random bits consumed turns out to be 
$O(\epsilon^{-2} \log \delta^{-1} \log \epsilon^{-1} (\log d)^8)$
\cite{DiVincenzoLeungTerhal}.
We note that there are constructions of exact unitary $2$-designs
\cite{CleveLeungLiuWang} using Clifford gate circuits of size
$O((\log d) (\log \log d)^2 (\log \log \log d))$, but they use
at least $\Omega(\log d)$ ancilla qubits. Since qubits are likely
to be a very expensive resource in foreseeable implementations of
quantum gates, we prefer that all our algorithms be in-place without
using ancilla qubits. 

Dankert et al.~\cite{DankertCleveEmersonLivine} showed how to improve
the running time and the number of random bits in the basic procedure
by replacing the use of an exact $2$-design by a $\theta$-approximate
$2$-design. Due to the replacement, the probability of success in 
the basic procedure becomes 
\[
\int_{\U(d)} 
\bra{0} V^{-1} \Lambda(V \ket{0}\bra{0} V^{-1}) V \ket{0} \, 
d\nu(V) \leq
\int_{\U(d)} 
\bra{0} V^{-1} \Lambda(V \ket{0}\bra{0} V^{-1}) V \ket{0} \, 
d\,\Haar(V) + 
\theta \lVert \Lambda \rVert_\Diamond =
\bcF_{\Lambda} + \theta,
\]
where we used the fact that $\lVert \Lambda \rVert_\Diamond = 1$
as $\Lambda$ is a quantum operation. We can now take 
$\theta = \epsilon / 2$ and repeat the basic procedure 
$\Theta(\epsilon^{-2} \log (1/\delta))$ times and take the 
empirical average of successes, in order to get an estimate of
$\bcF_{\Lambda}$ to within an additive error of $\epsilon$ with
confidence $1 - \delta$.
The running time and usage of random bits of this algorithm turn out to be
$O(\epsilon^{-2} \log \delta^{-1} (\log d) \log \epsilon^{-1})$.
\cite{DankertCleveEmersonLivine}.
We note that Dankert et 
al.~\cite[Theorem~3]{DankertCleveEmersonLivine} 
incorrectly repeat the basic procedure $O(\log \delta^{-1})$ times to
get confidence $1 - \delta$, ignoring the issue of additive error
$\epsilon$ completely.

By Fact~\ref{fact:TPEvsDesign},
replacing the Haar random unitary $U$ in the
basic procedure by a $(d, s, \frac{\epsilon}{2d^4}, 2)$-qTPE $\cG$ 
also leads to an 
efficient algorithm for estimating average gate fidelity. In fact,
a direct analysis shows that a $(d, s, \frac{\epsilon}{2d}, 2)$-qTPE
suffices too.
Such $2$-qTPEs can be obtained via the so-called {\em zigzag product} 
\cite{Sen}.
The unitaries of the qTPE can be implemented by circuits of size
$O((\log d)^2 (\log d/\epsilon)$. The number of random bits required
is only $O(\log (d/\epsilon))$.
The running time of this algorithm turns out to be
$O(\epsilon^{-2} \log \delta^{-1} (\log d)^2 (\log d/\epsilon))$, and 
the number of
random bits consumed turns out to be 
$O(\epsilon^{-2} \log \delta^{-1} \log (d/\epsilon))$.
The circuit size is slightly inferior to 
Dankert et al.~\cite{DankertCleveEmersonLivine} but the number of random
bits used is less.

\section{Randomness efficient algorithm using approximate unitary 
$2$-designs}
\label{sec:efficient}
Instead of repeating Algorithm~\ref{algo:naivebasicprocedure} with 
independently chosen
unitaries per iteration, we pick the sequence of unitaries of the 
approximate
$2$-design from an approximate $k$-wise independent distribution for
a suitable value of $k$. For clarity, we give the full algorithm below.
\begin{algorithm}[!!!hhh]
\label{algo:efficientbasicprocedure}
\SetAlgoLined
\caption{Basic procedure of randomness efficient algorithm}
{\bf Input:} Classical description of a $d \times d$ unitary $Y$.

\begin{enumerate}

\item 
Start with the state $\ket{0} \in \mathcal{H}$;

\item 
Apply $Y$ on $\cH$; 

\item 
Apply the quantum operation $\Lambda$ (the experimental realisation
of $U^{-1} U$) to the state obtained in the above step;

\item 
Apply $Y^{-1}$ to the state obtained in the above step;

\item 
Measure the resulting state according to the binary outcome measurement
$\{\ket{0}\bra{0}, \one_{\cH} - \ket{0}\bra{0}\}$. Declare success if
the outcome $\ket{0}\bra{0}$ is observed.
\end{enumerate}
\end{algorithm}
\begin{algorithm}[!!!hhh]
\label{algo:efficient}
\SetAlgoLined
\caption{Randomness efficient algorithm using approximate $2$-design}
{\bf Input:} $\epsilon, \delta > 0$.

\ 

{\bf Assumption:} $\epsilon < \bcF_{\Lambda}$.

\ 

{\bf Define:} $n \triangleq \frac{2^4 \log (2/\delta)}{\epsilon^2}$, 
$
\theta \triangleq 
\frac{\delta}{2} (\frac{\epsilon}{n})^{\frac{\epsilon^2 n}{4}}.
$

\ 

{\bf Take:} set $\cY$ to be an $\epsilon/2$-approximate unitary 
$2$-design with
$\log |\cY| = O(\log (d/\epsilon))$. A unitary of $\cY$ can be
implemented by circuits of size $O((\log d)^2 (\log (d/\epsilon)))$. 
This follows from the so-called zigzag product \cite{Sen}.

\ 

{\bf Construct:} a sequence $Y_1, Y_2, \ldots, Y_n$ of unitaries from
$\cY$ as the output of $f: \{0,1\}^r \rightarrow \cY^n$ guaranteed
by Fact~\ref{fact:sampling}, when a uniformly random input
$z \in \{0,1\}^r$ is fed to $f$, where
\[
r \triangleq
4 \epsilon^2 (n \log |\cY|) + 2 \log \theta^{-1} =
O(\log \delta^{-1} (\log (d/\epsilon) + \log \log \delta^{-1})). 
\]

\ 

{\bf Run:} Algorithm~\ref{algo:efficientbasicprocedure} with unitaries
$Y_1, Y_2, \ldots, Y_n$. Record the outputs $b_1, b_2, \ldots, b_n$.
Declare $b := n^{-1} \sum_{i=1}^n b_i$ as the estimate for 
$\bcF_{\Lambda}$.
\end{algorithm}

Algorithm~\ref{algo:efficient} consists of a
classical preprocessing step where the sequence $Y_1, Y_2, \ldots, Y_n$
is computed. This takes classical deterministic time
$O(\epsilon^{-2} \poly(r))$. After the preprocessing step,
Algorithm~\ref{algo:efficient} runs in quantum time
$O(\epsilon^{-2} \log \delta^{-1} (\log d)^2 (\log (d/\epsilon)))$.
The number of random bits used is
$
r = 
O(\log \delta^{-1} (\log (d/\epsilon) + \log \log \delta^{-1})). 
$
By Fact~\ref{fact:sampling}, Algorithm~\ref{algo:efficient} gives
an estimate $b$ of $\bcF_{\Lambda}$ such that
$
\Pr [|b - \bcF_{\Lambda}| > \epsilon] \leq \delta. 
$
Comparing with the zigzag product based algorithm in 
Section~\ref{sec:dankert},
we get the same quantum circuit size but much lesser usage of
random bits. 

Instead of the zigzag product, we can take
the $\epsilon/2$-approximate unitary $2$-design of
Dankert et al. \cite{DankertCleveEmersonLivine}. 
An advantage of doing this is that we use only Clifford
gates in the approximate unitary $2$-design which may be technologically
easier to implement. 
That would give
us classical preprocessing time of $O(\epsilon^{-2} \poly(r))$,
quantum time of 
$O(\epsilon^{-2} \log \delta^{-1} (\log d) (\log \epsilon^{-1}))$ and
number of random bits
$
r =
O(\log \delta^{-1} ((\log d) (\log \epsilon^{-1}) 
  + \log \log \delta^{-1})). 
$
Comparing with the corresponding algorithm in
Section~\ref{sec:dankert},
we get the same quantum circuit size but much lesser usage of
random bits. Thus this algorithm is a good candidate for actual 
experimental implementations in the near future.

\section{Randomness efficient algorithm using approximate 
$4$-qTPE}
\label{sec:4design}
Suppose $\frac{108}{\epsilon^2 d} < \frac{\delta}{2}$. Then there is an
even more randomness efficient algorithm than the one given in
Section~\ref{sec:efficient} as follows.
\begin{algorithm}[!!!hhh]
\label{algo:4design}
\SetAlgoLined
\caption{Randomness efficient algorithm using approximate $4$-qTPE}
{\bf Input:} $\epsilon, \delta > 0$.

\ 

{\bf Assumption:} 
$\epsilon < \frac{\bcF_{\Lambda}}{2}$,
$\frac{108}{\epsilon^2 d} < \frac{\delta}{2}$.

\ 

{\bf Define:} $n \triangleq \frac{12 \log (4 \delta^{-1})}{\epsilon^2}$. 

\ 

{\bf Take:} set $\cY$ to be a $(d, s, \frac{1}{4 d^3}, 4)$-qTPE
with $\log |\cY| = O(\log d)$. A unitary of $\cY$ can be
implemented by circuits of size $O((\log d)^3)$.
This follows from the so-called zigzag product \cite{Sen}.

\ 

{\bf Choose:} a uniformly random unitary $Y$ from $\cY$.

\ 

{\bf Run:} Algorithm~\ref{algo:efficientbasicprocedure} $n$ times
with the same unitary $Y$.
Record the outputs $b_1, b_2, \ldots, b_n$.
Declare $b := n^{-1} \sum_{i=1}^n b_i$ as the estimate for 
$\bcF_{\Lambda}$.
\end{algorithm}

Using Proposition~\ref{prop:conc} in the special case where 
$t = l = 1$ and $\lambda = \frac{1}{4 d^3}$, we get
\[
\Pr_{Y: \qTPE}[
|\cF_{\Lambda}(Y) - \bcF_{\Lambda}| > \epsilon/2
] \leq
4 \epsilon^{-2} (
\frac{26}{d} + 
\lambda (2 d)^{2}
) \leq
\frac{108}{\epsilon^2 d} \leq
\frac{\delta}{2}.
\]
Sampling uniformly from the qTPE $\cY$ requires only $O(\log d)$
random bits. Each unitary of $\cY$ can be implemented in quantum 
time $O((\log d)^3)$. These facts follow from the so-called zigzag
product \cite{Sen}.

Now suppose that
$
|\cF_{\Lambda}(Y) - \bcF_{\Lambda}| < \frac{\epsilon}{2}
$
indeed. 
By Fact~\ref{fact:StandardChernoff}, running 
Algorithm~\ref{algo:efficientbasicprocedure} $n$ times
with the same unitary $Y$ will give an estimate $b$ of 
$\cF_{\Lambda}(Y)$ such that
\[
\Pr[|b - \cF_{\Lambda}(Y)| > \frac{\epsilon}{2}] \leq
2 \exp(-\epsilon^2 n / 12) \leq
\frac{\delta}{2},
\]
where the probability arises because of inherent quantum uncertainty of 
measurement outcomes. Overall we get,
$
\Pr[|b - \bcF_{\Lambda}| > \epsilon] \leq
\delta.
$
Algorithm~\ref{algo:4design} takes quantum running time
$O(\epsilon^{-2} \log \delta^{-1} (\log d)^3)$ which
is the same as the zigzag product based approximate $2$-design 
Algorithm~\ref{algo:efficient} of Section~\ref{sec:efficient}. 
However it consumes only
$O(\log d)$ random bits, where the constant hiding in the $O(\cdot)$
is independent of $\epsilon$ and $\delta$, which is less than what 
Algorithm~\ref{algo:efficient} consumes.
The drawback of Algorithm~\ref{algo:4design} is that it requires that
$\frac{108}{\epsilon^2 d} < \frac{\delta}{2}$, whereas
Algorithm~\ref{algo:efficient} has no such assumption. In practice, 
this means that Algorithm~\ref{algo:4design} can only be used if
the dimension $d$ is large and the estimation error $\epsilon$ and 
confidence error $\delta$ are not too small.

\section{Randomness efficient algorithm using approximate $4l$-qTPE}
\label{sec:qTPE}
We now address the main drawback of Algorithm~\ref{algo:4design} i.e.
what if the dimension $d$ is not large enough? This is a pertinent 
question because in the near future we hope to experimentally implement
and benchmark quantum circuits acting on ten to fifty qubits. For 
ten to twenty qubits, the dimension $d$ may not be large enough to 
satisfy the assumption 
$\frac{108}{\epsilon^2 d} < \frac{\delta}{2}$ for reasonably small 
values of $\epsilon$ and $\delta$. It would be nice to have an efficient
algorithm using a few random bits for arbitrary values of $d$, $\epsilon$
and $\delta$. Algorithm~\ref{algo:efficient} is one such algorithm, 
but it requires us to be able to implement the entire plethora of 
$2^{\Omega(\log (d/\epsilon))}$ unitaries in an approximate $2$-design.
In some technological scenarios, it may be better to spend a few more
random bits but reduce the number of unitaries that may need to be 
implemented by the algorithm. Algorithm~\ref{algo:4design} is indeed
a step in this direction but it suffers from the drawback mentioned
above. 

To achieve this goal, we develop Algorithm~\ref{algo:qTPE} by combining 
ideas from 
Algorithms~\ref{algo:efficient} and \ref{algo:4design}. 
Algorithm~\ref{algo:qTPE} is a two-phase algorithm using an
$l$-qTPE for $l \geq 4$, but still small enough to be efficiently
implementable.
\begin{algorithm}[!!!hhh]
\label{algo:qTPE}
\SetAlgoLined
\caption{Randomness efficient algorithm using approximate $4l$-qTPE}
{\bf Input:} $\epsilon, \delta > 0$.

\ 

{\bf Assumption:} 
$\epsilon < \frac{\bcF_{\Lambda}}{2}$,
$4 \log (16/\delta) < \frac{d^{1/6}}{10 \log d}$.

\ 

{\bf Define:} 
$l \triangleq \log (16/\delta)$,
$t \triangleq \lceil \frac{2^{11} l}{\epsilon^2 d} \rceil$,
$\lambda \triangleq (\frac{\epsilon^2}{2^5 d^2})^l$,
$n \triangleq \frac{16 \log (4/\delta^{-1})}{\epsilon^2}$, 
$
\theta \triangleq 
\frac{\delta}{4}(\frac{\epsilon}{n})^{\frac{\epsilon^2 n}{16}}.
$

\ 

{\bf Take:} set $\cV$ to be a $(d, s, \lambda, 4l)$-qTPE
with $\log |\cV| = O(\log \lambda^{-1})$. A unitary of $\cV$ can be
implemented by circuits of size 
$
O((\log d)^2 (\log \delta^{-1})^2 
   (\log \log \delta^{-1}) (\log \lambda^{-1})
).
$
This follows from the zigzag product \cite{Sen}.

\ 

{\bf Choose:} independently and uniformly at random unitaries
$V_1, \ldots, V_t$ from $\cY$. Call this set of unitaries as $\cY$.

\ 

{\bf Construct:} a sequence $Y_1, Y_2, \ldots, Y_n$ of unitaries from
$\cY$ as the output of $f: \{0,1\}^r \rightarrow \cY^n$ guaranteed
by Fact~\ref{fact:sampling}, when a uniformly random input
$z \in \{0,1\}^r$ is fed to $f$, where
$
r \triangleq
\epsilon^2 n \log t  + 2 \log \theta^{-1}.
$

\ 

{\bf Run:} Algorithm~\ref{algo:efficientbasicprocedure} 
with unitaries $Y_1, Y_2, \ldots, Y_n$.
Record the outputs $b_1, b_2, \ldots, b_n$.
Declare $b := n^{-1} \sum_{i=1}^n b_i$ as the estimate for 
$\bcF_{\Lambda}$.
\end{algorithm}

In Phase~1,
we take a $(d, s, \lambda, 4l)$-qTPE $\cV$ and sample from it 
independently and uniformly at random a small 
number of unitaries $V_1, \ldots, V_t$. For
$
l = \log (16 / \delta),
$
$
t = \frac{2^{11} l}{\epsilon^2 d} =
\frac{2^{11} \log (16/\delta)}{\epsilon^2 d}.
$
$
\lambda \triangleq 
(\frac{\epsilon^2}{2^5 d^2})^l =
(\frac{\epsilon^2}{2^5 d^2})^{\log 16 / \delta},
$
Proposition~\ref{prop:conc} will ensure that the
gate fidelity averaged over this small set of unitaries is 
$\epsilon/2$-close to the actual average gate fidelity $\bcF_{\Lambda}$
with confidence at least $1 - \frac{\delta}{2}$. The number of random
bits required to independently sample $t$ times uniformly from $\cV$ is
\begin{equation}
\label{eq:phase1}
O(t \log \lambda^{-1}) = 
O(\epsilon^{-2} d^{-1} (\log \delta^{-1})^2 
  (\log d + \log \epsilon^{-1})
 ).
\end{equation}
Each unitary in $\cV$ can be implemented in quantum time
$
O((\log d)^2 (\log \delta^{-1})^3 
   (\log \log \delta^{-1}) \log (d/\epsilon)
).
$
The constants hiding in both the $O(\cdot)$ notations above are
independent of $d$, $\epsilon$, $\delta$.
These facts follow from the so-called zigzag product \cite{Sen}.
Morally speaking, this preprocessing step
achieves the goal of the first phase of Algorithm~\ref{algo:4design}.

In Phase~2 of our algorithm,
we sample only from the set $\cY \triangleq \{V_1, \ldots, V_t\}$ 
in an approximate limited
independence fashion as in Fact~\ref{fact:sampling}, and
run Algorithm~\ref{algo:efficientbasicprocedure} on those samples in
so as to empiricially estimate the gate fidelity 
averaged over the set $\cY$.
We apply Fact~\ref{fact:sampling} with 
$
n \triangleq
\frac{2^4 \log (4/\delta)}{\epsilon^2},
$
$
\theta \triangleq 
\frac{\delta}{4} (\frac{\epsilon}{n})^{\frac{\epsilon^2 n}{16}} =
\frac{\delta}{4} 
(\frac{\epsilon^3}{2^4 \log (4/\delta)})^{\log (4/\delta)},
$
\begin{equation}
\label{eq:phase2}
\begin{array}{rcl}
r 
& \triangleq &
\epsilon^2 n \log t + 2 \log \theta^{-1} \\
& = &
16 \log \delta^{-1} 
(\log \log \delta^{-1} + 2 \log \epsilon^{-1} - \log d + O(1)) +
2 \log (4/\delta) \\
&   &
{} +
2 \log \delta^{-1} (\log \log \delta^{-1} + 3 \log \epsilon^{-1} + O(1)) \\
& = &
18 \log \delta^{-1} \log \log \delta^{-1} + 
38 \log \delta^{-1} \log \epsilon^{-1} -
16 \log \delta^{-1} \log d + O(\log \delta^{-1}).
\end{array}
\end{equation}
Morally speaking, this achieves the goal of the second
phase of Algorithm~\ref{algo:efficient}. 

We now analyse the running time and number of random bits used
by Algorithm~\ref{algo:qTPE}. The total confidence of Phases~1 and 2
is at least $1 - \delta$.
Suppose
$
\frac{2^{11} \log (16/\delta)}{\epsilon^2 d} \leq 1.
$
Then, we set $t = 1$ and so Phase~2 does not use any more random bits.
The total number of random bits used in Phases~1 and 2 now becomes
$O(\log \delta^{-1} \log (d/\epsilon))$,
where the constant hiding in the $O(\cdot)$ notation does not depend
on $\epsilon$, $\delta$, $d$. This is comparable to the total number of
random bits used by Algorithm~\ref{algo:efficient}. 
However, Algorithm~\ref{algo:qTPE} requires
us to implement only one unitary as opposed to implementing  potentially 
all the unitaries in an approximate $2$-design in  
Algorithm~\ref{algo:efficient}. Also the constraint on the dimension
$d$ is less stringent than the constraint required by 
Algorithm~\ref{algo:4design}.

Suppose
$
\frac{2^{11} \log (16/\delta)}{\epsilon^2 d} > 1.
$
The total number of random bits used in Phases~1 and 2 becomes
\[
O(\epsilon^{-2} d^{-1} (\log \delta^{-1})^2 \log (d/\epsilon) +
  \log \delta^{-1} \log \log \delta^{-1}
 )
\]
where the constant hiding in the $O(\cdot)$ notation does not depend
on $\epsilon$, $\delta$, $d$. Now the number of random bits used
is greater than that of Algorithm~\ref{algo:efficient}. 
However, Algorithm~\ref{algo:qTPE} requires
us to implement only 
$
\frac{\log (16/\delta)}{\epsilon^2 d}
$
unitaries as opposed to implementing  potentially 
$2^{\Omega(\log (d/\epsilon))}$ unitaries in
Algorithm~\ref{algo:efficient}. This is a big saving on the number
of unitaries an algorithm has to potentially implement.

In both cases, the total running time of Algorithm~\ref{algo:qTPE} is
$
O(\epsilon^{-2} (\log \delta^{-1})^4
  (\log d)^2 (\log \log \delta^{-1}) \log (d/\epsilon)
),
$
which is slightly worse than that of Algorithm~\ref{algo:efficient}.
The worse running time is due to the fact that Algorithm~\ref{algo:qTPE}
needs a $O(\log \delta^{-1})$-qTPE whereas Algorithm~\ref{algo:efficient}
only needs a $2$-qTPE. Any future improvement in construction of
$t$-qTPEs for large $t$ will improve the running time of
Algorithm~\ref{algo:qTPE}.

\section{Conclusion}
\label{sec:conclude}
We have described three new algorithms for
efficient in-place estimation, without using ancilla qubits, of 
average fidelity of a quantum logic gate 
using much fewer  random bits than
what was known so far. We considered only in-place algorithms in this
work because qubits are likely to remain an expensive resource 
in experimental implementations in the near future, and so we would like
to avoid ancilla qubits as far as possible. We achieve the 
reductions in the number of
random bits by appealing to two powerful tools. The first tool,
a limited independence pseudorandom generator,
comes from classical derandomisation theory in computer science. 
It is used in the first and the third estimation algorithms. 
The second tool,
an approximate $l$-quantum tensor product expander ($l$-qTPE) for
 moderate values of $l$, is a recent
quantum computational object defined as an analogue of an
approximate $l$-wise independent pseudorandom generator well known
from classical derandomisation theory. In fact, to obtain our parameters 
we actually have to appeal to the state of the art in quantum tensor
product expanders, which in turn were obtained by `quantising' 
another famous result from classical derandomisation theory viz. 
the zigzag product of graphs. The second tool is used in our second,
and more strongly, in our third estimation algorithm.

Each of our algorithms have unique features that, depending upon the
experimental limitations and desired parameters, sometimes make one of
them the most suitable, sometimes another. If one wants an efficient
 algorithm
that works for all values of the gate dimension $d$, estimation error
$\epsilon$ and confidence error $\delta$, and uses the least number
of random bits, then the first algorithm is the way to go. Reducing
the number of random bits will increase the reliability of the
estimation in practice as explained earlier. Moreover, this algorithm
can be chosen to be implemented using only Clifford gates which may be an 
advantage for some technologies. 

If the gate dimension is
large, then the second algorithm is the best one. It uses the least
number of random bits and needs to implement only one unitary from
an approximate $4$-qTPE. If the aim is to use an algorithm that works
for all parameter values but needs to implement as few unitaries as
possible, then the third algorithm is the right one. Though it uses
more random bits than the first algorithm, it needs to implement much
fewer unitaries than the first. This feature can be the most crucial
for certain technologies.

Moving on from estimating average gate fidelity, 
it will be 
interesting to find other applications of both classical derandomisation
as well as quantum derandomisation tools in quantum computation, both
theoretical and experimental.
The experimental implementations of quantum computers of the near future,
of the order of tens of qubits, will be very noisy. 
Reliable and efficient test suites optimised for every bit of
precision and performance  are the need of the hour to measure
progress in the exciting times to come.

\end{document}